\newtheorem{theorem}{Theorem}
\newacronym{dpmm}{DPMM}{Dirichlet process mixture model}
\newacronym{ntr}{NTR}{neutral-to-the-right}
\newacronym{iid}{i.i.d.}{independently and identically distributed}
\newacronym{pbc}{PBC}{primary biliary cirrhosis}
\newacronym{IS}{IS}{importance sampling}
\newacronym{smc}{SMC}{sequential Monte Carlo}
\newacronym{ess}{ESS}{effective sample size}
\newacronym{aft}{AFT}{accelerated failure time}
\newacronym{mcmc}{MCMC}{Markov chain Monte Carlo}
\newacronym{cdf}{CDF}{cumulative distribution function}
\newacronym{km}{KM}{Kaplan-Meier}
\newacronym{mle}{MLE}{maximum likelihood estimate}
\newacronym{rv}{r.v.}{random variable}
\DeclareMathOperator*{\argmin}{arg\,min}
\newcommand{\iid}{\overset{\mathrm{iid}}{\sim}}
\DeclareRobustCommand\full {\tikz[baseline=-0.5ex]\draw[thick,color=red!60!black] (0,0)--(0.5,0);}
\DeclareRobustCommand\fullmid  {\tikz[baseline=-0.5ex]\draw[thick,color=red!60!black,opacity = 0.5] (0,0)--(0.5,0);}
\DeclareRobustCommand\fulllow  {\tikz[baseline=-0.5ex]\draw[thick,color=red!60!black, opacity = 0.25] (0,0)--(0.5,0);}
\DeclareRobustCommand\fullmidk  {\tikz[baseline=-0.5ex]\draw[thick,opacity = 0.6] (0,0)--(0.5,0);}
\DeclareRobustCommand\dottedmidk{\tikz[baseline=-0.5ex]\draw[thick,dotted,opacity = 1.] (0.06,0)--(0.5,0);}
\begin{document}

%

%

\twocolumn[

\aistatstitle{A Predictive Approach to Bayesian Nonparametric Survival Analysis}

\aistatsauthor{ Edwin Fong \And Brieuc Lehmann}

\aistatsaddress{University of Oxford  \And University College London} ]

\begin{abstract}
  Bayesian nonparametric methods are a popular choice for analysing survival data due to their ability to flexibly model the distribution of survival times. These methods typically employ a nonparametric prior on the survival function that is conjugate with respect to right-censored data. Eliciting these priors, particularly in the presence of covariates, can be challenging and inference typically relies on computationally intensive Markov chain Monte Carlo schemes. In this paper, we build on recent work that recasts Bayesian inference as assigning a predictive distribution on the unseen values of a population conditional on the observed samples, thus avoiding the need to specify a complex prior. We describe a copula-based predictive update which admits a scalable sequential importance sampling algorithm to perform inference that properly accounts for right-censoring. We provide theoretical justification through an extension of Doob's consistency theorem and illustrate the method on a number of simulated and real data sets, including an example with covariates. Our approach enables analysts to perform Bayesian nonparametric inference through only the specification of a predictive distribution.
\end{abstract}

\section{INTRODUCTION}

Survival data, also known as time-to-event data, is ubiquitous in a number of domains including economics, engineering, biology, and medicine. {Common examples include the time to failure of a mechanical component, or the time to death of an individual following treatment. The overarching aim of survival analysis is to study the distribution of these survival times. In survival regression, the aim is to assess the effect of covariates on survival time.} 

{A characteristic feature of survival data is that it is often censored - that is, we may not know the survival time exactly. In the case of \textit{right-censoring}, we only observe the information $Y > c$, where $Y$ is the time-to-event of interest and $c$ is the observed censoring time. Right-censoring can occur, for example, if a subject leaves a study before the event of interest has occurred. The partial nature of the information associated with the observed data poses some challenges to statistical inference.} 

A primary goal in survival analysis is to \textit{predict} the survival time for a new individual, perhaps taking into account known covariates (e.g. age) for said individual. In other words, the aim is to learn a predictive distribution $p(y_{n+1} | x_{n+1}, \{x_i, y_i\}_{i = 1:n})$, where $\{x_i, y_i\}_{i = 1:n}$ is an observed training set. To reduce notational burden, we henceforth omit reference to covariates $x$. The standard Bayesian approach to this problem is to first specify a data-generating distribution $f_\theta(y)$, depending on a (potentially infinite-dimensional) parameter $\theta$, and prior $\pi(\theta)$. The predictive distribution is then taken to be the posterior predictive distribution. In the uncensored case, this is
\begin{equation} \label{eq:Bayes_pred}
        p(y_{n+1} | y_{1:n}) = \int f_\theta(y_{n+1}) \,\pi(\theta | y_{1:n})\,d\theta,
\end{equation}
where $\pi(\theta | y_{1:n}) \propto \pi(\theta)\prod_{i=1}^n f_\theta(y_i)$ is the posterior distribution, which is often also of interest.

Here, we take a more direct approach to prediction and posterior inference by explicitly specifying a predictive distribution instead of the usual likelihood and prior.  {In particular, we extend the notion of \textit{martingale posterior distributions} \citep{Fong2021} to right-censored data, appropriately accounting for the partially observed nature of the censored values. In doing so, we leverage one of the key advantages of the martingale posterior framework in replacing the standard \gls*{mcmc} approach to posterior computation with a GPU-friendly and parallelisable optimisation-based algorithm}. Our main contributions are as follows: a) we describe a class of copula-based predictive updates that are suitable under right-censoring; b) we extend Doob's consistency theorem to the setting with right-censored observations, confirming the conceptual equivalence of standard Bayesian inference and the martingale posterior in this setting; c) to perform inference, we develop a sequential \gls*{IS} procedure, avoiding the need for more computationally intensive \gls*{mcmc} algorithms.

\section{RELATED WORK}

There is a rich history of Bayesian nonparametric methods for the analysis of survival data. These typically employ a \gls*{ntr} process \citep{doksum1974} prior on the survival function, chosen for its conjugacy property with respect to censored data \citep{ferguson1979}. {Some examples of such priors include the extended gamma process \citep{kalbfleisch1978}, the beta process \citep{hjort1990}, and the beta-Stacy process \citep{muliere1997a}.  \citet{muliere1997b} offered a generalisation of the beta process based on a P\'{o}lya tree prior.} 
Yet another alternative approach was taken by \citet{kottas2006}, who modelled the {distribution of survival times} using a \gls*{dpmm} with a Weibull kernel. Our copula-based predictive update is intimately linked to the \gls*{dpmm} (see Section~\ref{sec:bivariate_copula}). 

Building on these foundations, extensions to survival regression have been developed based on proportional hazard models, for example by \citet{kalbfleisch1978,hjort1990,kim2003a}. \citet{RivaPalacio2021} relax the restriction of proportionality through the use of a vector of completely random measures. \citet{deIorio2009} developed a dependent Dirichlet process (DDP) mixture model for survival regression that also permits survival curves to cross in the context of a treatment effect analysis. Further examples can be found in \citet[Chapter~13]{ghosal2017}.

The idea of focusing directly on the specification of a predictive distribution goes back to at least \citet{hill1968}, who posited a uniform distribution on the intervals between the order statistics of the observations. Extensions of Hill's predictive distribution to censored data have been proposed by \citet{berliner1988} and \citet{coolen2004}. We build on recent work that proposes to relax the assumption of exchangeability in favour of conditionally identically distributed \citep{Berti2004} sequences, thus allowing for more flexible specifications of the predictive distribution \citep{berti2021}. In particular, we focus on one-step-ahead predictive updates based on bivariate copulas, initially proposed in \citet{Hahn2018} for the uncensored case. As noted in \citet{Fong2021}, there are also connections between this predictive approach and the Bayesian bootstrap \citep{rubin1981} and its extensions to censored data \citep{lo1993, arfe2020}.

\section{BACKGROUND}
In this section, we describe the martingale posterior distribution framework in the \textit{uncensored} \gls*{iid} data setting, as introduced by \citet{Fong2021}. In this work, Bayesian inference is reframed as an imputation problem, where the task is to elicit the joint predictive density on the missing information, which is the remainder of the population $y_{n+1:\infty}$ given an observed sample $y_{1:n}$ in the \gls{iid} case. The joint density of interest can be written as a product of 1-step-ahead predictive densities,
\begin{equation}\label{eq:chain_rule}
p(y_{n+1:N} \mid y_{1:n}) = \prod_{i=n+1}^N p_{i-1}(y_i),
\end{equation}
where we write $p_i(y) := p(y \mid y_{1:i})$ with corresponding \gls*{cdf} $P_i(y)$. Intuitively, a general statistic of interest can then be recovered from the population $y_{1:N}$, which is written as $\theta(y_{1:N})$. The predictive uncertainty in $y_{n+1:N}$ then induces a distribution on $\theta(y_{1:N})$. We will formalize these notions later on.

For the parametric Bayesian with{ sampling density} $f_\theta(y)$ and prior $\pi(\theta)$, the posterior predictive density $p_i(y)$ is defined
as in (\ref{eq:Bayes_pred}). The statistic is then an estimate of $\theta$ indexing the sampling density, e.g. the posterior mean, $\bar{\theta}_N = E\left[\Theta \mid y_{1:N}\right]$,{ where $\Theta$ is the Bayesian random parameter that is marginally distributed according to the prior $\pi$}. With this choice, it can be shown through Doob's consistency theorem \citep{Doob1949} that the above scheme is equivalent to posterior sampling in the limit of $N\to \infty$, that is $\bar{\theta}_\infty  \sim \pi(\theta \mid y_{1:n})$, { where $\bar{\theta}_\infty := \lim_{N\to \infty}\bar{\theta}_N$}. Through this result, parameters are viewed as functions of 
the population of observables, and Bayesian uncertainty can intuitively be seen to arise from subjective uncertainty on the missing remainder of the population. 

\subsection{Martingale Posterior Distributions}
The martingale posterior distribution considers more general sequences of predictive distributions than that induced by the likelihood and prior, and is hence a generalisation of standard Bayesian inference. Given a sequence of predictive \gls*{cdf}s $P_n(y),P_{n+1}(y),\ldots$, one can impute the remainder of the infinite population through the scheme
$$
Y_{n+1} \sim P_n(y),\quad  Y_{n+2}\sim P_{n+1}(y), \quad\ldots ~.
$$
This sequential imputation scheme is termed \textit{predictive resampling}, as it is inspired by the Pólya urn scheme  \citep{blackwell1973} for the Bayesian bootstrap. In practice, it is infeasible to work with infinite populations, so we terminate predictive resampling at $Y_N$ for some large $N > n$.  However, there are still constraints on this sequence $P_i$ needed to ensure a notion of predictive coherence, and in particular for the random limiting empirical distribution to exist {so that we can compute a functional of interest}. The limiting empirical distribution is given by
$$
F_\infty(y) = \lim_{N\to \infty}\frac{1}{N}\left\{\sum_{i=1}^n \mathds{1}(y_i \leq y) + \sum_{i=n+1}^N \mathds{1}(Y_i \leq y) \right\}.
$$
A sufficient condition{ for the existence of $F_\infty$} is that the sequence $P_i$ implies a \textit{conditionally identically distributed} (c.i.d.) sequence of \glspl*{rv}, as investigated in \citet{Berti2004}. The sequence $Y_{n+1},Y_{n+2},\ldots$ is c.i.d. if 
$$
P(Y_{i+k}\leq y \mid y_{1:i}) = P_i(y), \quad \forall k > 0.
$$
This ensures that the sequence of predictive distributions $P_i$ is a martingale, and that predictive resampling returns a well-defined empirical distribution. 

Moreover, the parameter of interest no longer needs to index a sampling density. Instead, it can be defined as 
$$
\theta_0 := \theta(F_0) = \argmin_\theta \int \ell(\theta,y) \, dF_0(y)
$$
where $F_0$ is the true sampling density and the loss function $\ell(\theta,y)$ is elicited by the analyst. After predictive resampling, a sample from the martingale posterior can be recovered by computing $\theta_\infty = \theta(F_\infty)$.  

\subsection{Bivariate Copula Updates} \label{sec:bivariate_copula}
We now discuss a concrete example of a c.i.d. sequence of predictive densities that is both computationally feasible and no longer relies on the likelihood and prior. A useful class of predictive densities depends on the \textit{bivariate copula density}, and was introduced in \citet{Hahn2018}. Briefly, the bivariate copula density is a bivariate probability density function with uniform marginals, that is $d: [0,1]^2 \to \mathbb{R}$ where $\int d(u,v) \, du = \int d(u,v) \, dv = 1$. See \citet{Nelsen2007} for more details. For univariate data, a sequence of predictive densities can be defined recursively through
\begin{equation}\label{eq:copula_seq}
p_i(y) = d_i\{P_{i-1}(y),P_{i-1}(y_i)\}\, p_{i-1}(y).
\end{equation}
Here, $d_i$ is a sequence of bivariate copula densities which models the dependence between $Y_i$ and $Y_{i+1}$. \citet{Hahn2018} showed that all Bayesian predictives have an update of this form, although it is usually intractable for non-conjugate models. A tractable sequence of copula densities is then introduced, inspired by the \gls*{dpmm}, which does not correspond to a Bayesian model. Exploiting the c.i.d. property of this update, \citet{Fong2021} explore their use in the context of martingale posteriors  and provide further extensions to multivariate data and regression.

\subsubsection{Bivariate Copula Updates on $\mathbb{R}^+$}

The updates introduced in \citet{Hahn2018} are applicable to data with support on the entire real line $\mathbb{R}$, and are motivated by the  \gls*{dpmm} with the normal kernel. Survival times, however, are typically strictly positive, so we will now introduce a copula update for data supported on the positive reals, $\mathbb{R}^+$.

We begin by introducing said copula update in the absence of censoring, which is motivated by the first posterior predictive update of the \gls*{dpmm} with an exponential kernel. The \gls*{dpmm} can be written as
\begin{equation*}
\begin{aligned}
     &~f_G(y) = \int \text{Exp}(y \mid \theta) \, dG(\theta) \\ G &\sim \text{DP}(c,G_0), \quad G_0 = \Gamma(\theta \mid a,1).
     \end{aligned}
\end{equation*}
where $\text{Exp}(y \mid \theta)$ is the exponential density with rate $\theta$, and $\Gamma(\cdot \mid a,1)$ is the gamma density with shape $a$ and rate $1$. In Appendix \ref{sec:supp_meth}, we show that this inspires the update
\begin{equation}\label{eq:positive_update}
\begin{aligned}
    p_{i}(y) &= \left[ 1-\alpha_{i} + \alpha_{i}d_{a}\left\{P_{i-1}(y),P_{i-1}(y_{i}) \right\}\right] p_{i-1}(y)\\
d_a(u,v) &= \frac{a + 1}{a} \frac{(1-u)^{-\frac{a + 1}{a}}(1-v)^{-\frac{a + 1}{a}}}{\left\{(1-u)^{-\frac{1}{a}} + (1-v)^{-\frac{1}{a}} -1	\right\}^{a + 2}}\cdot
\end{aligned}
\end{equation}
In fact, the above update corresponds exactly to the \gls*{dpmm} update from $p_0 \to p_1$, but is different for $p_i$ with $i > 1$. The sequence $\alpha_i$ should in general be $\mathcal{O}(i^{-1})$ to approach the independent copula for consistent estimation, and the  specific suggestion of 
$\alpha_i = \left(2-{1}/{i}\right) /({i+1})
$ is motivated in \citet{Fong2021}. Note that the above is a mixture of the independence copula and the Clayton copula \citep{Clayton1978}, as was also pointed out in \citet{Hahn2018}. See \citet[Chapter 2.9]{Balakrishnan2009} for more details. The update for the \gls*{cdf} $P_i(y)$ is similarly tractable and is derived in Appendix \ref{sec:supp_meth}. 

Here, $a > 0$ acts as a bandwidth term, where smaller values indicates a stronger peak; the update in (\ref{eq:positive_update}) is analogous to a kernel density estimate but on $\mathbb{R}^+$. This is illustrated in Figure \ref{fig:copula_bw}a in which we plot the copula kernel $d_a(u_{i-1},v_{i-1})\, p_{i-1}(y)$ for decreasing values of $a$, where $u_{i-1} = P_{i-1}(y), v_{i-1} = P_{i-1}(y_{i})$. The updated density is a weighted mixture of $p_i$ (dashed) and the copula kernel (solid), which is shown in  Figure \ref{fig:copula_bw}b. 

 \begin{figure*}[!ht]
\centering
 \includegraphics[width=0.9\textwidth]{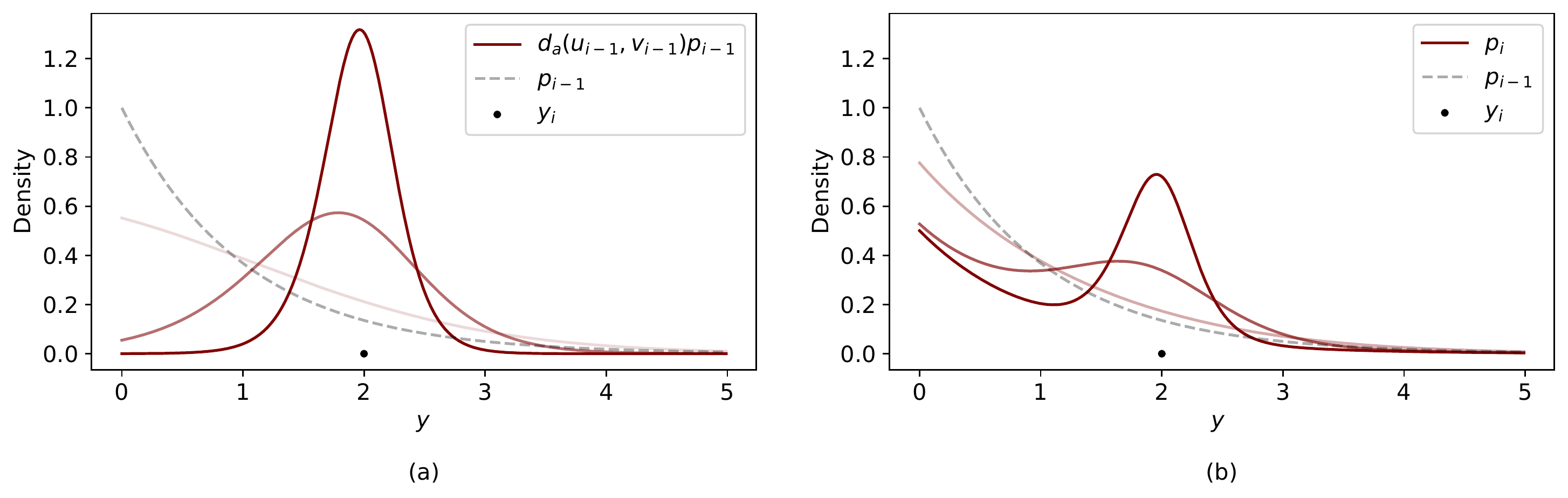}
 \vspace{-5mm}
  \caption{Plot of the (a) copula kernel $d_a(u_{i-1},v_{i-1})\, p_{i-1}(y)$ and (b) updated predictive $p_{i}(y)$, for $a =2,0.5,0.2$ (\fulllow,\fullmid,\full).}\label{fig:copula_bw} \vspace{-2mm}
\end{figure*}

In the case of regression with covariates $\mathbf{x} \in \mathbb{R}^d$, a similar argument based on the DDP mixture model can be used to derive an update for the conditional density $p_i(y \mid \mathbf{x})$. This takes on the form
\begin{equation}\label{eq:conditreg_conditional}
\begin{aligned}
p_{i}(y \mid \mathbf{x}) = \{&1-\alpha_{i}(\mathbf{x},\mathbf{x}_{i})+\\ &\alpha_{i}(\mathbf{x},\mathbf{x}_{i})\, d_{a}\left(q_{i-1},r_{i-1}\right)\}\, p_{i-1}(y\mid \mathbf{x})
\end{aligned}
\end{equation}
where $q_{i-1} = P_{i-1}(y \mid \mathbf{x}), \quad r_{i-1} = P_{i-1}(y_i \mid \mathbf{x}_i)$. 
The exact form of the function $\alpha_{i+1}(\mathbf{x},\mathbf{x}_{i+1})$, provided in Appendix \ref{sec:supp_meth}, can be derived from the multivariate copula update \citep{Fong2021}. Intuitively, $\alpha_{i}(\mathbf{x},\mathbf{x}_{i})$ weights the copula kernel based on the distance between the covariate of interest $\mathbf{x}$ and the updating datum $\mathbf{x}_i$. 

\subsubsection{Practical Details}
We now review the practical details discussed in \citet{Fong2021}. To estimate $p_n(y)$ before predictive resampling, we need to begin with $p_0(y)$, which acts as our prior guess of the true density. A choice that works well in practice is to set $p_0(y) = \text{Lomax}(a,1)$, which matches the \gls*{dpmm}. Here $a$ is the bandwidth parameter, which we can set by maximizing the prequential log-likelihood \citep{dawid1984}, $\sum_{i=1}^n \log p_{i-1}(y_i)$. 
Fitting the copula method has a computational complexity of $\mathcal{O}(n^2)$, as we must first compute the overhead terms $P_{i-1}(y_{i})$ for $i = 1,\ldots, n$. Given these terms, computing $p_n(y)$ at any value of interest is then $\mathcal{O}(n)$.

A key property of the copula methods is the convenience of predictive resampling. The copula update for $p_{n+1}$ only depends on $Y_{n+1}$ through $P_n(Y_{n+1})$, and predictive resampling involves drawing $Y_{n+1} \sim P_n$. As a result, we only need to draw $P_i(Y_{i+1}) \iid \mathcal{U}[0,1]$ for $i = n,\ldots,N-1$, and compute the copula update (\ref{eq:positive_update}) appropriately. Drawing a sample of $p_N(y)$ at some test point is thus $\mathcal{O}(N-n)$. In the regression case, we can draw $\mathbf{X}_{n+1:N}$ from the Bayesian bootstrap, and $P_i(Y_{i+1} \mid \mathbf{X}_{i+1})$ can be similarly drawn from the uniform distribution as in the no-covariate case.

\section{PREDICTIVE RESAMPLING UNDER RIGHT-CENSORING}
{The above description assumed that we observed each of the survival times exactly. We are now ready to extend the predictive resampling framework to right-censored data.} Suppose we have observed the dataset $\mathcal{D}_n:= \{y_{1:k}, Y_{k+1:n} \geq c_{k+1:n}\}$, where for convenience we have ordered the data such that the first $k$ are observed and the remaining are right-censored.
Throughout the remainder of this work, we assume that the censoring mechanism is \textit{non-informative} - that is, we treat $c_{k+1:n}$ as constants in conditional probability statements. See \citet{berliner1988} and Appendix \ref{sec:supp_theory} for more details on the relevant assumptions in the Bayesian and predictive cases.
Once again, the Bayesian requires $y_{1:N}$ (for $N\to \infty$) to compute any statistic of interest, and so it is natural that the Bayesian elicits the predictive density
\begin{equation}\label{eq:censored_future_joint}
p(y_{k+1:N} \mid \mathcal{D}_n)
\end{equation}
on $Y_{k+1:N}$ which is missing. In contrast to the uncensored case however, $Y_{k+1:n}$ is partially observed. The key is to factorize (\ref{eq:censored_future_joint}) into
\begin{equation}\label{eq:factorized_joint}
    p(y_{n+1:N} \mid y_{1:n}) \, p(y_{k+1:n} \mid \mathcal{D}_n),
\end{equation}
and so predictive resampling consists of the following:
\begin{enumerate}
    \item Impute $Y_{k+1:n} \sim p(y_{k+1:n} \mid \mathcal{D}_n)$.
    \item Predictive resample $Y_{n+1:N} \sim p(y_{n+1:N} \mid y_{1:n})$ as before.
    \item Compute a statistic of interest $\theta(Y_{1:N})$.
\end{enumerate}
The distribution of $\theta(Y_{1:N})$ is then approximately our martingale posterior distribution { $\pi_\infty(\theta \mid \mathcal{D}_n)$, where the subscript is used to distinguish from the regular Bayesian posterior.} We note that the exact martingale posterior distribution would involve computing the functional of the limiting empirical distribution; see Appendix \ref{sec:supp_theory} for more details. We also highlight the connection to the multiple imputation framework of \citet{Rubin1996}, where the full posterior $\pi(\theta \mid y_{1:n})$ is replaced by $p(y_{n+1:N} \mid y_{1:n})$ and the imputing predictive is given by $p(y_{k+1:n} \mid y_{1:k}, Y_{k+1:n} \geq c_{k+1:n}).$

\subsection{Doob's Consistency Theorem for Right-censored Observations}

As discussed above for the \gls*{iid} setting with fully observed data, it follows from Doob's consistency theorem \citep{Doob1949} that predictive resampling with the { parametric} posterior predictive distribution is equivalent to posterior sampling in the limit of $N\to \infty$. We now extend this result to the case where some of the observations are right-censored, as is typical for survival data.

Assume that for all $N$, the \glspl*{rv} $[\Theta,Y_1,\ldots,Y_N]$ have joint density
$$
p(\theta,y_{1:N}) = \pi(\theta)\, \prod_{i=1}^N f_\theta(y_i).
$$
Denote the posterior mean as $\bar{\theta}_N = E\left[\Theta \mid Y_{1:N} \right]$ (for $\Theta$ in a linear space), and $f^c_\theta(y) = \mathds{1}\lbrace y \geq c \rbrace f_\theta(y) / \bar{F}_\theta(c)$ 
to be the density of a data point right-censored at $c$, where $\bar{F}_\theta$ is the survival function of $f_\theta$.

We draw $Y_{k+1:n} \sim p(y_{k+1:n} \mid \mathcal{D}_n)$ where
$$
p(y_{k+1:n} \mid \mathcal{D}_n) = \int  \prod_{i=k+1}^n f^{c_i}_\theta(y_i) \, \pi(\theta \mid \mathcal{D}_n) \, d\theta,
$$
and
$\pi(\theta \mid \mathcal{D}_n) \propto \pi(\theta)\, \prod_{i=1}^k f_\theta(y_i) \prod_{i=k+1}^n \bar{F}_\theta(c_i),$ 
which follows from the non-informative censoring.

We then draw $Y_{n+1:N} \sim p(Y_{n+1:N} \mid y_{1:n})$  where
$$
p(y_{n+1:N} \mid y_{1:n}) = \int \prod_{i=n+1}^N f_\theta(y_i) \, \pi(\theta \mid y_{1:n}) \, d \theta,
$$
and compute $\bar{\theta}_N$ from $Y_{1:N}$.
The following result establishes the equivalence of predictive resampling and standard Bayesian inference as $N \to \infty$.

\begin{theorem} \label{thm:doob}
Assume $E[|\Theta| \mid \mathcal{D}_n] < \infty$. Under regularity conditions on $f_\theta$, we have that
\begin{equation}\label{eq:doob_surv}
\lim_{N\to \infty} \bar{\theta}_N =  \Theta \quad \textnormal{a.s. } P^\infty(\cdot \mid \mathcal{D}_n).
\end{equation}
where $P^\infty$ is over $\Theta$ and $Y_{k+1:\infty}$.
\end{theorem}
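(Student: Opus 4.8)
The plan is to pass to the conditional measure $P^\infty := P^\infty(\cdot\mid\mathcal{D}_n)$, observe that on this measure $(\bar\theta_N)_{N\ge n}$ is an honest Doob (L\'evy) martingale closed by $\Theta$, and then run the classical Doob argument: martingale convergence, followed by measurability of $\Theta$ with respect to the tail $\sigma$-field generated by the (partly imputed, partly resampled) population.

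First I would establish the bookkeeping identity $\bar\theta_N = E_{P^\infty}[\Theta\mid Y_{1:N}]$, $P^\infty$-a.s., for every $N\ge n$. This is exactly where the stated forms of $\pi(\theta\mid\mathcal{D}_n)$ and $p(y_{k+1:n}\mid\mathcal{D}_n)$ — both consequences of the non-informative censoring assumption — enter. Multiplying $\pi(\theta\mid\mathcal{D}_n)$, the truncated imputation densities $\prod_{i=k+1}^n f^{c_i}_\theta(y_i)$, and the resampling densities $\prod_{i=n+1}^N f_\theta(y_i)$, the truncation denominators $\bar F_\theta(c_i)$ cancel against the factors $\bar F_\theta(c_i)$ in the posterior, and on the event $\{Y_{k+1:n}\ge c_{k+1:n}\}$, which has $P^\infty$-probability one, the joint density of $(\Theta,Y_{k+1:N})$ under $P^\infty$ is proportional to $\pi(\theta)\prod_{i=1}^N f_\theta(Y_i)$. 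Hence the $P^\infty$-conditional law of $\Theta$ given $Y_{1:N}$ is the ordinary posterior $\propto\pi(\theta)\prod_{i=1}^N f_\theta(Y_i)$, whose mean is precisely $\bar\theta_N$ as defined from the original prior model; this also shows that, conditionally on $\Theta$, the resampled tail $Y_{n+1},Y_{n+2},\ldots$ is i.i.d.\ $f_\Theta$ under $P^\infty$.

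Granting this, $(\bar\theta_N,\sigma(Y_{1:N}))_{N\ge n}$ is a martingale under $P^\infty$, closed by $\Theta$, and uniformly integrable since $E_{P^\infty}[|\Theta|] = E[|\Theta|\mid\mathcal{D}_n] < \infty$ by hypothesis; the martingale convergence theorem (in the closed form, valid for the Bochner-integrable, linear-space-valued $\Theta$) gives $\bar\theta_N \to E_{P^\infty}[\Theta\mid\mathcal{F}_\infty]$ $P^\infty$-a.s.\ and in $L^1$, with $\mathcal{F}_\infty = \sigma(Y_{1:\infty})$. It then remains to show $\Theta$ is $\mathcal{F}_\infty$-measurable up to $P^\infty$-null sets, so that the limit is $\Theta$ itself. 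For this I would use the resampled tail: since the $Y_{n+1},Y_{n+2},\ldots$ are i.i.d.\ $f_\Theta$ given $\Theta$, applying the strong law to $\mathds{1}(Y_i\le q)$ for each rational $q$ (and using right-continuity) shows the empirical distribution function of $Y_{n+1:N}$ converges $P^\infty$-a.s.\ to $F_\Theta$; the regularity conditions on $f_\theta$ — identifiability of $\theta$ from $F_\theta$, positivity $\bar F_\theta(c_i)>0$ so that $f^{c_i}_\theta$ is well defined, and measurable recoverability of $\theta$ from its distribution function — then furnish a fixed measurable $g$ with $\Theta = g(Y_{n+1},Y_{n+2},\ldots)$ $P^\infty$-a.s. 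Thus $E_{P^\infty}[\Theta\mid\mathcal{F}_\infty] = \Theta$ and \eqref{eq:doob_surv} follows.

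I expect the main obstacle to be the first step: rigorously identifying the prior-model posterior-mean map $Y_{1:N}\mapsto\bar\theta_N$ with the conditional expectation under the reweighted measure $P^\infty$. Because $P^\infty$ is obtained by conditioning on the null event $\{Y_{1:k}=y_{1:k}\}$ together with the positive-probability event $\{Y_{k+1:n}\ge c_{k+1:n}\}$, the argument must be phrased through disintegrations and Radon--Nikodym densities rather than naive conditioning on events, and one must check that this change of measure perturbs only the joint law of $(\Theta,Y_{1:n})$ while leaving the conditional law of the tail $Y_{n+1:\infty}$ given $(\Theta,Y_{1:n})$ unchanged. A secondary point needing care is making explicit which regularity conditions on $f_\theta$ are invoked in the final measurability step, mirroring the consistency hypothesis in the classical statement of Doob's theorem.
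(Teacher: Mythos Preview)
Your argument is correct, but it takes a longer route than the paper's. You work directly under the measure $P^\infty(\cdot\mid\mathcal{D}_n)$ and essentially reprove Doob's theorem from scratch there: first you identify $\bar\theta_N$ with $E_{P^\infty}[\Theta\mid Y_{1:N}]$ by showing that the truncation denominators cancel so the joint density is proportional to $\pi(\theta)\prod_{i=1}^N f_\theta(y_i)$, then you run L\'evy's martingale convergence and finish with the tail-measurability step via the strong law and identifiability.

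The paper instead uses a disintegration argument that avoids all of this machinery. It first \emph{conditions on} the imputed values $y_{k+1:n}$: for each fixed $y_{k+1:n}$ with $E[|\Theta|\mid y_{1:n}]<\infty$, the standard (uncensored) Doob theorem applies as a black box and gives $\lim_N\bar\theta_N=\Theta$ a.s.\ $P^\infty(\cdot\mid y_{1:n})$. The tower rule $E[E[|\Theta|\mid y_{1:n}]\mid\mathcal{D}_n]=E[|\Theta|\mid\mathcal{D}_n]<\infty$ shows this inner integrability holds for $P(\cdot\mid\mathcal{D}_n)$-almost all $y_{k+1:n}$, and then one simply integrates the indicator of the convergence event against the factorization $dP^\infty(\Theta,y_{k+1:\infty}\mid\mathcal{D}_n)=dP^\infty(\Theta,y_{n+1:\infty}\mid y_{1:n})\,dP(y_{k+1:n}\mid\mathcal{D}_n)$ to obtain the conclusion.

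The advantage of the paper's route is that it sidesteps exactly the obstacle you flagged --- the delicate identification of $\bar\theta_N$ with a conditional expectation under the reweighted measure $P^\infty$ --- by reducing immediately to the fully-observed case, where the classical theorem is available off the shelf. Your approach, on the other hand, is more self-contained and makes the martingale structure under $P^\infty$ explicit, which is arguably more informative about why the result holds; but it requires redoing both the cancellation computation and the tail-measurability argument rather than invoking them through the existing theorem.
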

\vspace{-0.45cm}

\begin{proof}
See Appendix \ref{sec:supp_theory}.
\end{proof}

\vspace{-0.2cm}
Similarly to \citet{Fong2021}, the above theorem directly links Bayesian uncertainty in the parameter, represented by $\Theta \sim \pi(\theta \mid \mathcal{D}_n)$, to the statistical uncertainty 
in the \textit{partially} observed $Y_{k+1:n}$ and unobserved $Y_{n+1:\infty}$.
This can be seen by considering the following two distinct methods of sampling $\Theta$ from the posterior. The first is the standard Bayesian approach to draw $\Theta \sim \pi(\theta \mid \mathcal{D}_n)$ directly. The second, predictive resampling, begins by first imputing the partially observed data points $Y_{k+1:n}$ from the joint density $p(y_{k+1:n} \mid \mathcal{D}_n)$ followed by the completely unseen observables $Y_{n+1:\infty}$ from the sequence of predictive densities
$$
    Y_{n+1}  \sim p(\cdot \mid y_{1:n}),~ 
    Y_{n+2}  \sim p(\cdot \mid y_{1:n+1}),~
    \dots,
$$
until we have the complete information $Y_{1:\infty}$. Given $Y_{1:\infty}$, we can then compute the limiting estimate ${\bar{\theta}_\infty = \lim_{N \to \infty} \bar{\theta}_N}$, which is the posterior mean, on the entire dataset. By the above theorem, this returns $\bar{\theta}_\infty \sim \pi(\theta \mid \mathcal{D}_n)$. 

{ We emphasize that the purpose of Theorem 1 as outlined above is to provide a conceptual illustration that, in the Bayesian \textit{parametric} case, the uncertainty in a point estimator $\bar{\theta}_N$ computed from imputed observations is equivalent to uncertainty in the Bayesian random parameter $\Theta$. The choice of the posterior mean $\bar{\theta}_N$ as the estimator is one of mathematical convenience, allowing us to directly leverage the result of \citet{Doob1949}; it may not be of practical use when the posterior mean is not analytically available. In the more general martingale posterior case, the c.i.d. property  guarantees the existence of the limiting empirical distribution, $F_\infty$, under our imputation and predictive resampling scheme, again relying on martingales in an analogous way to Doob's theorem. We can then compute the functional of interest, $\theta(F_\infty)$, to obtain a posterior sample. To show this in the c.i.d. case, we condition on $Y_{k+1:n}$ and utilize the properties of the original c.i.d. sequence in a similar way to Theorem 1. Further details can be found in Appendix \ref{sec:supp_theory}.}

\section{COPULA UPDATES UNDER RIGHT-CENSORING}

The copula updates introduced in Section~\ref{sec:bivariate_copula} assumed that observations were fully known. We now extend these methods to the right-censored case. For the purposes of exposition, we will continue to treat the first $k$ data points $y_{1:k}$ as uncensored,  with the remaining $y_{k+1:n}$ as right-censored at $c_{k+1:n}$. In practice however, a random ordering is usually preferred, and we highlight that the copula methods are not exchangeable. See {Appendix \ref{sec:supp_meth}} for further discussion on ordering. 

If the aim is to predict survival outcomes for a new individual given right-censored observations, the quantity of interest is the predictive density
\begin{equation}\label{eq:censored_pred}
p(y_{n+1} \mid \mathcal{D}_n).
\end{equation}
This can be written as
\begin{equation}\label{eq:censored_pred_integral}
\int p(y_{n+1} \mid y_{1:n}) \,p(y_{k+1:n} \mid \mathcal{D}_n) \, dy_{k+1:n},
\end{equation}
which can be computed via Monte Carlo, where $p(y_{n+1} \mid y_{1:n})$ is available through (\ref{eq:positive_update}). To obtain both the martingale posterior and predictive density, we will now develop a method to simulate from 
\begin{equation}\label{eq:censored_joint}
    p(y_{k+1:n} \mid \mathcal{D}_n).
\end{equation}

\subsection{Importance Sampling}
To simulate from (\ref{eq:censored_joint}) given a prescribed sequence $\{p_{i-1}(y_i)\}_{i = 1:n}$, we draw inspiration from \citet{Kong1994}, which considered this problem for {fully} missing data by sequential imputation followed by importance reweighting. We now introduce the methodology in the particular case of right-censored data.

For the first $k$ data points, the update (\ref{eq:positive_update}) can be used recursively to obtain $p_k(y_{k+1})$. When we reach the first censored datum $Y_{k+1} \geq c_{k+1}$, we cannot directly update the predictive density as it requires the value of $Y_{k+1}$. An intuitive, but incorrect, solution is as follows: impute $Y_{k+1} \sim p(y_{k+1} \mid Y_{k+1} \geq c_{k+1}, y_{1:k})$, then treat the sampled $Y_{k+1}$ as an observed value to update to $p_{k+1}$ via (\ref{eq:positive_update}). Then, draw $Y_{k+2}$ from $p_{k+1}(y_{k+2}) = p(y_{k+2} \mid y_{1:k}, Y_{k+1},Y_{k+2} \geq c_{k+2})$ and continue on in a sequential manner until we have $Y_{k+1:n}$. 

However, this sample is not drawn from (\ref{eq:censored_joint}). In short, this is because we have not used the future censored information $\{Y_{j} \geq c_{j}\}_{j > i}$ when imputing $Y_i$, for $i = k+1,\ldots,n$.
To correct for this, we can use \gls*{IS}, treating $Y_{k+1:n}$ as a proposal sample. 
Assuming non-informative censoring, the importance weights can be derived through the factorization of (\ref{eq:censored_joint}) into 
\begin{equation}
\begin{aligned}
 &\frac{\prod_{i = k+1}^n {p(y_i, Y_i \geq c_i \mid y_{1:i-1})}}{p(Y_{k+1:n} \geq c_{k+1:n}\mid y_{1:k})}\\
&\propto \underbrace{\prod_{i=k+1}^n p(y_i \mid Y_i \geq c_i, y_{1:i-1})}_{\text{Proposal}} \underbrace{\prod_{i = k+1}^n P(Y_i \geq c_i \mid y_{1:i-1})}_{\text{Unnormalized \gls*{IS} weights}}.
\end{aligned}
\end{equation}
{In the above, we have used the notation 
\begin{align*}
    p(x, Y \geq c) :=& \, p(x \mid Y \geq c) \, P(Y \geq c)  \\
    =& P(Y \geq c \mid x) \, p(x),
\end{align*} 
to represent the mixed joint density of the observed values and censored events, where $x$ is a continuous \gls*{rv} and $\mathds{1}(Y \geq c)$ can be considered as a discrete \gls*{rv}. }
The proposal is the joint density of $Y_{k+1:n}$ drawn from the scheme above, and the \gls*{IS} weight can be computed as we sequentially impute, since it depends only on $p_i$.

For the copula method specifically, the proposal is efficient to simulate from sequentially in a rejection-free manner. Writing $p_{i-1}^{c_i}(y) = p(y \mid Y_i \geq c_i,y_{1:i-1})$, we note that 
 $p^{c_i}_{i-1}(y)  \propto \mathds{1}(y\geq c_i) \, p_{i-1}(y)$
   from non-informative censoring.
Working in the space of \gls*{cdf}s, simulating $Y_{i}\sim p_{i-1}^{c_i}$ is equivalent to drawing
\begin{equation*}
    \begin{aligned}
U_{i} &\sim \mathcal{U}[P_{i-1}(c_{i}),1 ], \quad Y_{i} = P_{i-1}^{-1}(U_{i}).
\end{aligned}
\end{equation*}
However, we note that the update (\ref{eq:positive_update}) depends on $Y_{i}$ only through $U_{i} = P_{i-1}(Y_{i})$, and so we can utilize $U_{i}$ directly without computing $P_{i-1}^{-1}$. The term $P_{i-1}(c_{i})$ is used both in the proposal and the \gls*{IS} weight, and can be computed exactly as a tractable update also exists for the \gls*{cdf} sequence.

Given $B$ samples from the proposal  $\{Y_{k+1:n}^{(j)}\}_{j = 1:B}$ and self-normalized \gls*{IS} weights given by
$$
{w}^{(j)} = \prod_{i = k+1}^n \left[ 1-P^{(j)}_{i-1}(c_i)\right], \quad \tilde{w}^{(j)} = w^{(j)} / \sum_{j=1}^B w^{(j)},
$$
we can then approximate (\ref{eq:censored_pred}) with
$$
 \hat{p}(y_{n+1} \mid \mathcal{D}_n) = \sum_{j=1}^B \tilde{w}^{(j)} \, p^{(j)}_n(y_{n+1}),
$$
where $p^{(j)}_n$ is the random predictive density computed from $\{y_{1:k}, Y_{k+1:n}^{(j)}\}$ through (\ref{eq:positive_update}). Similarly, we can approximate the martingale posterior through
$$
    \hat{\pi}_
    {\infty}(\theta \mid \mathcal{D}_n) =  \sum_{j=1}^B \tilde{w}^{(j)} \, \delta_{\theta_N^{(j)}}
$$
where $\theta_N^{(j)} = \theta(y_{1:k}, Y_{k+1:N}^{(j)})$ and the unobserved $Y^{(j)}_{n+1:N} \sim p(y_{n+1:N} \mid y_{1:k}, Y_{k+1:n}^{(j)})$ are simulated through regular predictive resampling after imputing $Y_{k+1:n}^{(j)}$.

\subsection{Sequential Monte Carlo}
If the number of missing data points $n-k$ is large, \gls*{IS} may fail due to the dimensionality of the proposal. To mitigate the exponential variance increase of vanilla \gls*{IS}, we can use \gls*{smc}, noting that the importance weights have a straightforward online update. This induces additional resampling steps of $\{w^{(1:B)}, Y_{k+1:i}^{(1:B)}\}$ at time points $i$ when the \gls*{ess} is too low, e.g. less than $50\%$ of the original number of particles. In practice, we find \gls*{smc} to drastically improve the performance of our method for larger values of $n-k$ for a minor increase in computation. See \citet{Doucet2009} for more details.

\vspace{-0.2cm}

\subsection{Algorithm and Practical Details}
In practice, we find that randomizing the order of data greatly increases the \gls*{ess} in comparison to ordering the uncensored data before the censored data. However, the IS weights in this case have a slightly different form to take into account observed data points between censored data points. This is shown in Algorithm \ref{alg:joint_sampling}, using the notation $\delta_i = 1$ to indicate that $y_i$ is observed and $\delta_i =0$ to indicate that $Y_i$ is right-censored at $c_i$. See Appendix \ref{sec:supp_meth} for the derivation and { more details on the impact of ordering on the \gls*{ess}}.

 To select the bandwidth $a$, we can maximize the joint likelihood of the observations, $p(\mathcal{D}_n)$, which can be computed with SMC (Appendix \ref{sec:supp_meth}).  As we are still required to compute $\{P_{i-1}(\delta_i y_i + (1-\delta_i)c_i)\}_{i =1:n}$ for each particle, the total complexity of Algorithm \ref{alg:joint_sampling} is $\mathcal{O}(Bn^2)$, followed by $\mathcal{O}(Bn)$ for each prediction. { Details on the selection of the number of forward samples $N$ to sufficiently approximate the infinite population are given in Appendix \ref{sec:supp_meth}}.
\begin{algorithm}[!h]\label{alg:joint_sampling}
\DontPrintSemicolon
  \SetAlgoLined
{Initialize $p_0$ and $w^{(1:B)} = 1$}\;
\For{$i \gets 1$ \textnormal{\textbf{to}} $n$}{
 \For{$j \gets 1$ \textnormal{\textbf{to}} $B$}{
\If{$\delta_i = 1$}{
Update $w^{(j)} = w^{(j)} \times p^{(j)}_{i-1}(y_i)$ \;
 Update  $p^{(j)}_{i} \mapsfrom \left\{p^{(j)}_{i-1}, y_i\right\}$ from (\ref{eq:positive_update})\; 
}
\If{$\delta_i = 0$}{
Sample $Y^{(j)}_{i}  \sim {p}^{c_i}_{i-1}$\;
  Update $w^{(j)} = w^{(j)} \times \left[1 - P^{(j)}_{i-1}(c_i) \right]$ \;
   Update  $p^{(j)}_{i} \mapsfrom \left\{p^{(j)}_{i-1}, Y^{(j)}_{i}\right\}$ from (\ref{eq:positive_update})\;
}
}
\If{$ESS(w^{(1:B)}) < 0.5B$}{
Resample $\{1/B,\bar{p}_n^{(1:B)}\} \mapsfrom \{w^{(1:B)},p_n^{(1:B)}\}$ 
}
}
 Return $\{ w^{(1:B)},p_n^{(1:B)}\}$\;
\caption{Survival Copula Imputation}
\end{algorithm}

\vspace{-0.4cm}

\subsection{Covariates}
In the context of survival regression, we are interested in the effect of observed covariates $\mathbf{x}_{1:n}$ on survival outcomes. The density of interest is now
${p(y_{k+1:n} \mid \mathcal{D}_n, \,\mathbf{x}_{1:n})}
$. Given a tractable sequence of conditional densities, $p_i(y \mid \mathbf{x})$, the importance reweighting method above generalizes easily (Appendix \ref{sec:supp_meth}).
With (ignorable) missing covariates, our reweighting scheme can be combined with that of \citet{Kong1994}.
Note that the terms $\alpha_i(\mathbf{x},\mathbf{x}_{i})$ also depend on a hyperparameter $\rho_x$. We can fit $\{a,\rho_x\}$ jointly by maximizing the conditional prequential log-likelihood.

\section{EXAMPLES}

We illustrate our approach on a simulated data example and three real data examples, including two with covariates, comparing our approach to common Bayesian nonparametric survival analysis methods.

All copula examples are implemented in JAX \citep{Jax2018} and run on an Azure NC6 Virtual Machine with a one-half Tesla K80 GPU card (with compilation times $<5$s). The \gls*{dpmm} examples are run on a 2.4 GHz 8-Core Intel Core i9-9980HK using the \texttt{R} packages \texttt{dirichletprocess} \citep{ross2018} and \texttt{ddpanova} \citep{deIorio2004}. For all examples, we have $B = 2000$ \gls*{IS}/\gls*{mcmc} samples and set $N = 2000 + n$ for the number of future samples { which is sufficiently large for convergence (Appendix \ref{sec:supp_results})}. For the copula methods, we use a single random permutation of the data for each run { and fit the bandwidth automatically by maximizing the prequential log-likelihood.} The code and data used is available online\footnote{\href{https://github.com/edfong/survival_mp}{\url{https://github.com/edfong/survival_mp}}}. Further details, such as evaluation of the \gls*{ess} and standardization, are provided in Appendix \ref{sec:supp_results}.

\subsection{Simulated Data}\label{sec:sim}

{We begin by providing an empirical illustration of Theorem~\ref{thm:doob}, that is Doob's consistency theorem for right-censored observations, through a simulated data example with a Bayesian \textit{parametric} predictive.} We generated data
\begin{equation*}
Y_i \sim \text{Exp}(1), \quad C_i \sim \text{Exp}(2), \quad i = 1, \dots, n,
\end{equation*}
and right-censor if $y_i \geq c_i$, with $n = 50$.  Around 76\% of data points were right-censored. We consider fitting this data with an $\text{Exp}(1/\theta)$ sampling density under a conjugate inverse-gamma prior $\text{IG}(\theta \mid a_0, b_0)$. The posterior is then
$\pi(\theta \mid \mathcal{D}_n) = \text{IG}(a_n, b_n)$, where
$
a_n = a_0 + k, \quad b_n = b_0 + \sum_{i=1}^k y_i + \sum_{i = k+1}^n c_i,
$
and the posterior predictive is also analytically tractable as the $ \text{Lomax}(a_n,b_n)$ distribution (Appendix \ref{sec:supp_results}). 

For the inverse-gamma prior, we set $b_0=1$ and select $a_0 = 1.2$ by maximizing the marginal likelihood. We perform the imputation of the censored data points as in Algorithm~\ref{alg:joint_sampling}, noting that the importance weights are available via the Lomax \gls*{cdf}. This is then followed by regular predictive resampling (see Appendix \ref{sec:supp_meth}). Figure~\ref{fig:sim_param_smc_truth} illustrates the close agreement between the standard Bayesian posterior and the martingale posterior induced by the Lomax posterior predictive distribution, as expected from Theorem~\ref{thm:doob}. 

\begin{figure}[!ht]
\centering
 \includegraphics[width=0.45\textwidth]{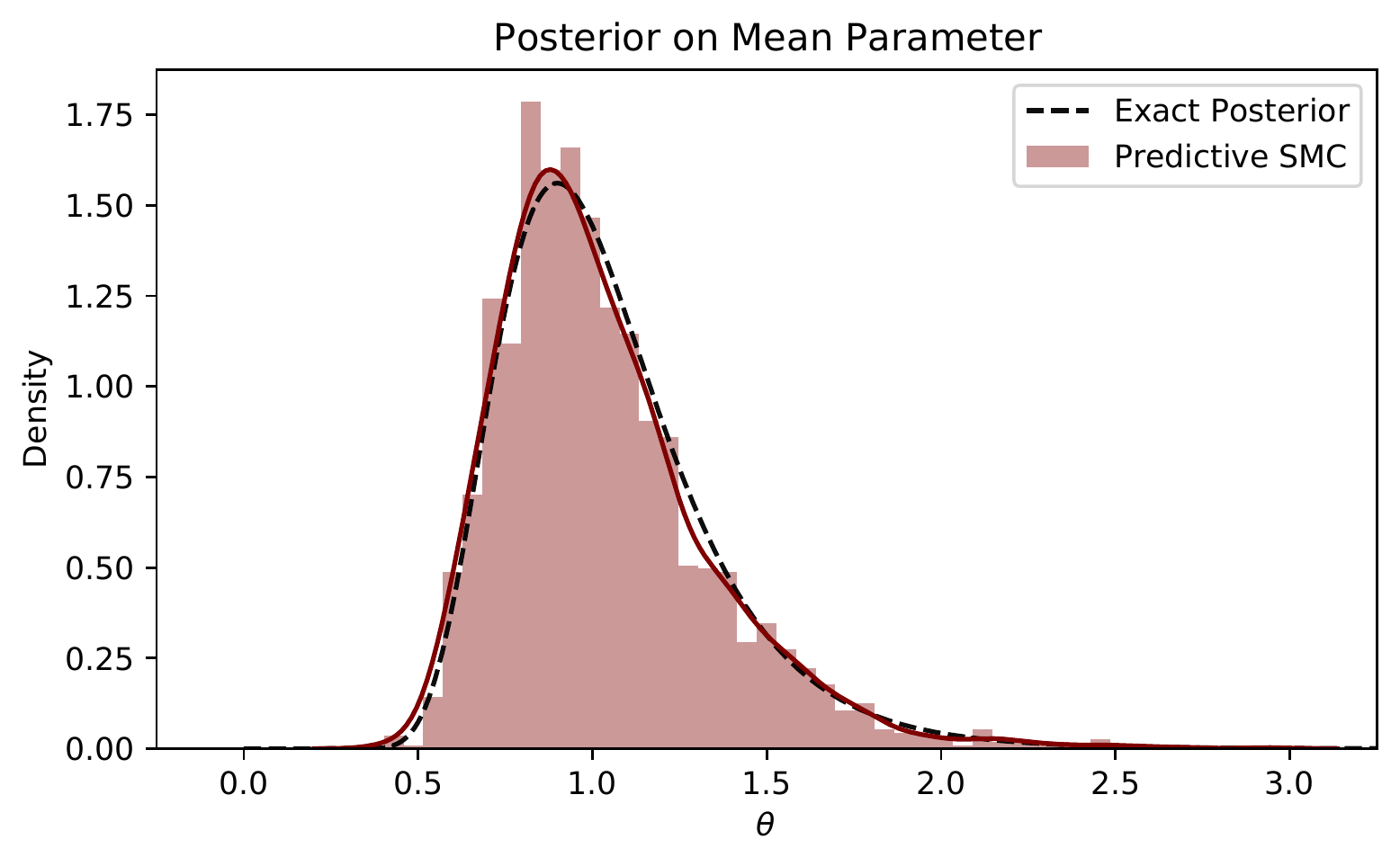}
 \vspace{-5mm}
  \caption{Exact Bayesian posterior vs. martingale posterior with parametric Lomax predictive generated via  Algorithm~\ref{alg:joint_sampling}.}\label{fig:sim_param_smc_truth} \vspace{-2mm}
\end{figure}

\subsection{Primary Biliary Cirrhosis}

We now shift attention to survival data from a randomized clinical trial on $n = 312$ patients with \gls*{pbc} \citep{dickson1989prognosis}, available in \texttt{R} through the \texttt{survival::pbc} dataset. A total of 158 patients received D-penicilammine, while the remaining 154 patients received a placebo. We compared our predictive resampling approach using the nonparametric exponential copula update with a \gls*{dpmm} using an exponential kernel. In particular, we focus on the predictive accuracy of these methods, evaluating performance on each of the trial arms separately. We first applied 10 random 50-50 train-test splits to the data, fit each model to the training set, then computed the mean log-likelihood of the resulting fit on the test set, which contains both censored and uncensored data points. The predictive accuracy of the two methods is almost identical (Table~\ref{tab:pbc}), and full details can be found in the Appendix \ref{sec:supp_results}.

In Figure \ref{fig:pbc2_posterior}a, we also plot the posterior mean and 95\% credible intervals of the survival function for the placebo arm, fitted to all 154 data points. We see that the results are similar, but the copula method has wider credible intervals. {See Appendix \ref{sec:supp_results} for posterior plots of the nonparametric density.} In Figure \ref{fig:pbc2_posterior}b, we plot posterior samples of the median, which is again similar. The copula method required 3.6s to optimize for hyperparameters and fit the data, and a further 0.9s
for predictive resampling on a $y$-grid of size $149$. In contrast, the \gls*{dpmm} took around 2 minutes.

\begin{center}
\begin{table}[h]
\begin{footnotesize}
\caption{Average Test Log-likelihood with Standard Errors (in Brackets) on the Two Arms of the \gls*{pbc} Dataset.\label{tab:pbc}}\vspace{1mm}
\begin{tabular}{ c c c }
 Dataset & Copula (exp) & \gls*{dpmm} (exp) \\
 \hline
  \gls*{pbc} (treatment) & -0.44 (0.02) & -0.43 (0.02) \\  
  \gls*{pbc} (placebo) & -0.39 (0.02) & -0.39 (0.03)    
\end{tabular}
\end{footnotesize}
\end{table}
\vspace{-10mm}
\end{center}

\begin{figure*}[!ht] 
\centering
 \includegraphics[width=0.85\textwidth]{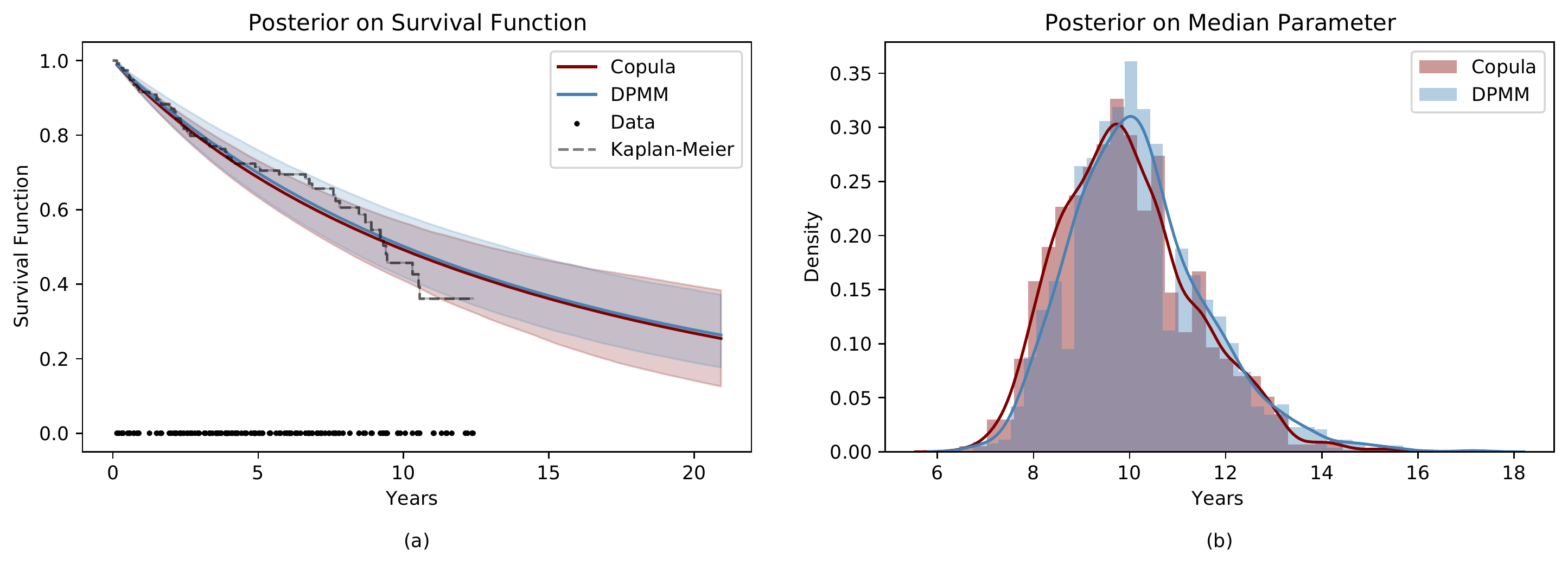}
 \vspace{-5mm}
  \caption{(a) Posterior mean and 95\% credible interval of the survival function and (b) samples of median for the \gls*{pbc} placebo arm.}\label{fig:pbc2_posterior}  \vspace{-2mm}
\end{figure*}

\subsection{Survival Regression}

Next, we illustrate our method for survival data in the presence of covariates. For this purpose, we analysed two datasets. First, we analysed data on $n = 205$ patients in Denmark with malignant melanoma, using tumour thickness as a covariate. Second, we analysed survival data of $n=863$ kidney transplant patients at The Ohio State University Transplant Center from 1982 to 1992, using patient age as a covariate. These datasets are available in \texttt{R} from the \texttt{MASS} \citep{mass} and \texttt{KMsurv} \citep{kmsurv} packages respectively.

For the baselines, we fit a \gls*{dpmm} with a log-normal kernel and an \gls*{aft} model with log-normal noise. For a fair comparison, we utilize a variant of the copula update,  substituting the Clayton copula in (\ref{eq:positive_update}) with the Gaussian copula, and setting $p_0(y) = \text{Log-normal}(y \mid 0, 1/(1-\rho))$. This corresponds to a copula update based on the log-normal \gls*{dpmm}; more details can be found in Appendix \ref{sec:supp_meth}.

Once again, we carried out 10 random 50-50 train-test splits and evaluated the predictive log-likelihood on the test set (Table \ref{tab:regression}). We see that the copula method performs the best for the melanoma dataset, but slightly worse than the other methods for the kidney dataset. 
Optimization, fitting and prediction for the copula method required around 3s and 14s for the melanoma and kidney dataset respectively, compared to 10s and 76s respectively for the DDP mixture, for each train-test split. It is also possible to predictively resample in the regression context (Appendix \ref{sec:supp_meth}). \vspace{-5mm}

\begin{center}
\begin{table}[h]
\begin{footnotesize}
\caption{Average Test Log-likelihood with Standard Errors (in Brackets) on the Melanoma and Kidney Datasets.\label{tab:regression}}\vspace{1mm}
\begin{tabular}{ c | c c c}
 Dataset & Copula & \gls*{dpmm} & AFT\\
 \hline
Melanoma &-0.22 (0.03)& -0.25 (0.02)& -0.23 (0.02)\\  
  Kidney & -0.11 (0.004) & -0.10 (0.004)  & -0.10 (0.003)  \\
\end{tabular}
\end{footnotesize}
\end{table}
\vspace{-7mm}
\end{center}

    For the melanoma dataset, we also visually evaluate the fit of the copula method on all $205$ data points (Figure \ref{fig:melanoma_surv_KM}). We follow the setup of \citet{RivaPalacio2021} and plot the predictive survival function for various tumour thicknesses $x$, comparing to the \gls*{km} estimator fit on windows centered around each $x$ value. The copula method matches reasonably closely with the stratified \gls*{km} estimator.{ Plots of the nonparametric median function can be found in Appendix \ref{sec:supp_results}.}

\begin{figure}[!h]
\centering
 \includegraphics[width=0.45\textwidth]{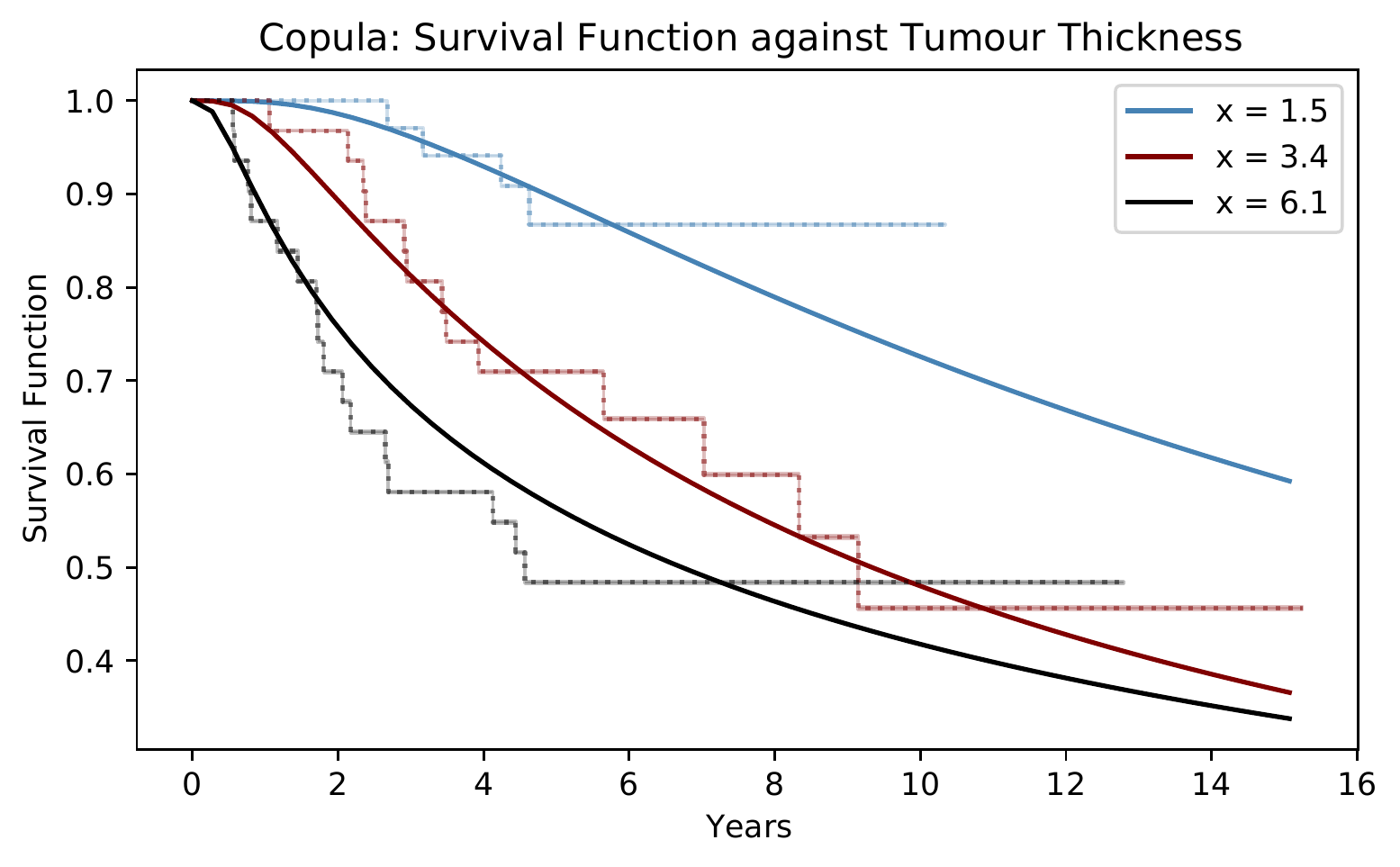}
 \vspace{-5mm}
  \caption{Survival function for copula method (\fullmidk) for $x =\{1.5,3.4,6.1\}$ with \gls*{km} (\dottedmidk) fit to windows $\{(1.255,1.75),(2.7,4.1),(4.1,8.1)\}$.}\label{fig:melanoma_surv_KM} 
\end{figure}








\section{DISCUSSION}

In this work, we have built on the martingale posterior distribution framework to the case in survival analysis where some of the observations are right-censored. 
We make use of one-step-ahead bivariate copula updates to perform inference, which admit a straightforward sequential importance sampling algorithm, thus avoiding the need for the likelihood/prior construction or \gls*{mcmc}. Our method is competitive with other Bayesian nonparametric survival models, both in terms of predictive accuracy and computation time. We note that a similar approach could be applied to other types of data with partially observed information. In future work, we hope to generalize by replacing imputation of the right-censored data with the imputation of $p(y_{\text{mis}} \mid y_{\text{obs}})$, where $y_{1:n} = \{y_{\text{obs}}, y_{\text{mis}}\}$.

There are a number of practical details regarding the implementation of our predictive resampling scheme that merit further investigation. Firstly, the computational complexity of the copula updates is $\mathcal{O}(n^2)$, which may be overly onerous for large datasets. Approximate methods such as subsampling may be one path towards reducing this computational cost. Secondly, we have found that predictive performance is, perhaps unsurprisingly, sensitive to the specification on the initial predictive density $p_0$. In our current scheme, the bandwidth parameter $a$ also controls the tails of $p_0$, which we set to be adaptive as a default value is difficult under right-censoring. Finally, we use grid-based optimisation to select the copula update hyperparameters. Although this approach was suitable for the examples studied here, it would not scale well to settings with a larger number of hyperparameters. Potential alternatives include stochastic gradient descent methods or a theoretically-justified plug-in selection procedure.

{We conclude by relating the martingale posterior framework to the \textit{operational subjective} approach to statistical inference \citep{lad1996}. The operational subjectivist specifies uncertain knowledge about quantities of interest through personal probability assertions, termed ``previsions'', relying on the fundamental theorem of prevision \citep{definetti1937, lad1990, lad1992} to ensure the coherence of a set of assertions. By assuming the exchangeability of observations, De Finetti's representation theorem provides a concise characterisation of the predictive via a common sampling density \citep{definetti1937}. In contrast, \citet{Fong2021} proposes to elicit the predictive directly via one-step-ahead copula updates. Under right-censoring, computational difficulty arises from the partial nature of the information associated with censored observations, which we resolve by using a Monte Carlo approximation that first imputes the missing information. While not pursued here, we note that it may be possible to bypass the Monte Carlo approximation by specifying a predictive that directly accounts for censored observations.}

\subsubsection*{Acknowledgements}

We thank the area chairs and the reviewers for their constructive feedback comments that have greatly improved this manuscript. We gratefully acknowledge invaluable guidance and support, especially in the initial stages of this work, from Chris Holmes. We also thank Stephen Walker for helpful discussions on predictive resampling for censored data, George Deligiannidis for assistance with the extension of Doob's consistency theorem, and Andrew Yiu for useful comments on the manuscript. E.F. was funded by The Alan Turing Institute Doctoral Studentship, under the EPSRC grant EP/N510129/1, and is currently employed at Novo Nordisk. B.L. was supported by the UK Engineering and Physical Sciences Research Council through the Bayes4Health programme (grant number EP/R018561/1) and gratefully acknowledges funding from Jesus College, Oxford.












\printbibliography


\clearpage
\appendix

\thispagestyle{empty}

\onecolumn \makesupplementtitle

\section{THEORY}\label{sec:supp_theory}

\subsection{Non-informative Censoring}\label{sec:supp_noninf_cens}
To illustrate the idea of non-informative censoring, consider the 
example of a single censored datum, $Y_1 \geq C_1$ where we observe $C_1 = c_1$. The usual random censoring assumption is
$$
Y_1 \sim F_\theta, \quad C_1 \sim G_\lambda.
$$
with $Y_1,~C_1$ independent. Under this assumption, the censoring mechanism is already non-informative for the \gls*{mle} of $\theta$, as the estimate of $\theta$ does not depend on $G_\lambda$. In the Bayesian case the additional assumption of prior independence,  $\pi(\theta, \lambda) = \pi(\theta)\, \pi(\lambda)$, is sufficient for the censoring mechanism to be non-informative. It is straightforward to show that the posterior in this case does not rely on $G_\lambda$, as it takes the form
\begin{equation*}
    \begin{aligned}
    \pi(\theta \mid Y_1 \geq C_1, c_1) &\propto {p(\theta, Y_1 \geq C_1, c_1)} \\ 
    &= \int p(\theta,\lambda, Y_1 \geq C_1, c_1) \, d\lambda\\ 
    &= \pi(\theta) \, \bar{F}_\theta(c_1) \, \int \pi(\lambda) \, g_\lambda(c_1) \, d\lambda\\
    &\propto \pi(\theta)\, \bar{F}_\theta(c_1).
    \end{aligned}
\end{equation*}
In the above, we have used the notation 
$p(x, Y \geq c) =p(x \mid Y \geq c) \, P(Y \geq c)= P(Y \geq c \mid x) \, p(x)$ to represent the mixed joint probability density function of the observed values and censored events, where $x$ is a continuous \gls*{rv} and $\mathds{1}(Y \geq c)$ can be considered as a discrete \gls*{rv}. We will continue to do so for the remainder of the Appendix in other contexts.

In predictive resampling, $p(y_{2:\infty}\mid y_1)$  by definition does not depend on the censoring. The censoring mechanism can thus only affect the imputing density $p(y_1 \mid Y_1 \geq C_1, c_1)$. Defining $p(y_1 \mid Y_1 \geq c_1) \propto \mathds{1}(y_1 \geq c_1)\, p(y_1)$, the Bayesian assumptions for non-informative censoring implies
$$
p(y_1 \mid Y_1 \geq C_1, c_1) = p(y_1 \mid Y_1 \geq c_1).
$$
We can see this through the following:
\begin{equation*}
    \begin{aligned}
    p(y_1 \mid Y_1 \geq C_1, c_1) &\propto {p(y_1, Y_1 \geq C_1,c_1)} \\
    &= P(Y_1 \geq C_1 \mid y_1, c_1)\, p(y_1,c_1) \\
    &= \mathds{1}(y_1 \geq c_1) \, p(y_1,c_1)
    \\&= \mathds{1}(y_1 \geq c_1) \, p(y_1)\, p(c_1)
    \\ &\propto \mathds{1}(y_1 \geq c_1) \, p(y_1).
    \end{aligned}
\end{equation*}
In the above,  $p(y_1,c_1) = \int f_\theta(y_1) \, g_\lambda(c_1) \, d\pi(\theta,\lambda)$. The key is that $p(y_1,c_1)$  factorizes into $p(y_1) \, p(c_1)$ under the assumptions of random censoring and independence in the prior.

In the absence of the likelihood and prior, a sufficient condition for $p(y_1 \mid Y_1 \geq C_1, c_1) = p(y_1 \mid Y_1 \geq c_1)$ is the factorization $p(y_1, c_1) = p(y_1) \, p(c_1)$, that is $Y_1$ and $C_1$ are a priori independent. More generally for the martingale posterior, a sufficient assumption for non-informative censoring is that under our predictive distribution, the vector $Y_{1:N}$ is independent of $C_{1:N}$ for all $N$. For the remainder of the Appendix, we continue to assume this and will treat the censoring times $c_i$ as constants.

\subsection{Doob's Consistency Theorem for Right-censored Observations \label{sec:supp_doob} }
In this section, we prove Doob's consistency theorem for the setting where some of the observations may be censored. As a reminder, we have $\mathcal{D}_n:= \{y_{1:k}, Y_{k+1:n} \geq c_{k+1:n}\}$. As $y_{1:k}$ are fully observed, we consider their values as fixed constants. For completeness, we include a repeat of the setup here.

Assume that for all $N$, the \glspl*{rv} $[\Theta,Y_1,\ldots,Y_N]$ have joint density
$$
p(\theta,y_{1:N}) = \pi(\theta)\, \prod_{i=1}^N f_\theta(y_i).
$$
We will make use of the standard Doob's consistency theorem (for uncensored observations) and so require the usual identifiability and measurability assumptions on the parametric sampling density $f_\theta$, which can be found in \citet{Doob1949} or \citet[Theorem~6.9, Proposition~6.10]{ghosal2017}. { Specifically, the identifiability condition is such that $F_\theta \neq F_{\theta'}$ whenever $\theta \neq \theta'$, where $F_\theta$ is the cumulative distribution function of $f_\theta$. This ensures that the parameter $\Theta$ can be recovered from the infinite sample.}
 
We assume that $\Theta$ lies in a linear space so the expectation is well-defined, and all regular conditional probability measures exist. We then write the posterior mean as $\bar{\theta}_N = E\left[\Theta \mid Y_{1:N} \right]$, and denote $f^c_\theta(y) = \mathds{1}\lbrace y \geq c \rbrace f_\theta(y) / \bar{F}_\theta(c)$ 
to be the density of a data point right-censored at $c$, where $\bar{F}_\theta$ is the survival function of $f_\theta$.

We draw $Y_{k+1:n} \sim p(y_{k+1:n} \mid \mathcal{D}_n)$ where
$$
p(y_{k+1:n} \mid \mathcal{D}_n) = \int  \prod_{i=k+1}^n f^{c_i}_\theta(y_i) \, \pi(\theta \mid \mathcal{D}_n) \, d\theta,
$$
and
$\pi(\theta \mid \mathcal{D}_n) \propto \pi(\theta)\, \prod_{i=1}^k f_\theta(y_i) \prod_{i=k+1}^n \bar{F}_\theta(c_i),$ 
which follows from the non-informative censoring assumption.

We then draw $Y_{n+1:N} \sim p(Y_{n+1:N} \mid y_{1:n})$  where
$$
p(y_{n+1:N} \mid y_{1:n}) = \int \prod_{i=n+1}^N f_\theta(y_i) \, \pi(\theta \mid y_{1:n}) \, d \theta,
$$
and compute $\bar{\theta}_N$ from $\{y_{1:k},Y_{k+1:N}\}$.
The following result establishes the equivalence of predictive resampling and standard Bayesian inference as $N \to \infty$.
\setcounter{theorem}{0}
\begin{theorem} 
Assume $E[|\Theta| \mid \mathcal{D}_n] < \infty$. Under regularity conditions on $f_\theta$, we have that
\begin{equation}\label{eq:supp_doob_surv}
\lim_{N\to \infty} \bar{\theta}_N =  \Theta \quad \textnormal{a.s. } P^\infty(\cdot \mid \mathcal{D}_n),
\end{equation}
where $P^\infty$ is over $\Theta$ and $Y_{k+1:\infty}$.
\end{theorem}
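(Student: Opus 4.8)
The plan is to reduce the censored-data statement to the classical (uncensored) Doob consistency theorem by a conditioning argument. The essential observation is that, although $Y_{k+1:n}$ are only partially observed, once we draw them from $p(y_{k+1:n}\mid\mathcal{D}_n)$ the augmented vector $\{y_{1:k},Y_{k+1:n}\}$ together with the predictively resampled $Y_{n+1:\infty}$ forms — under the measure $P^\infty(\cdot\mid\mathcal{D}_n)$ — an exchangeable sequence drawn i.i.d.\ from $f_\Theta$ with $\Theta$ distributed according to $\pi(\theta\mid\mathcal{D}_n)$. Concretely, I would verify that the joint law of $(\Theta,Y_{k+1:\infty})$ under $P^\infty(\cdot\mid\mathcal{D}_n)$ has density proportional to $\pi(\theta)\prod_{i=1}^k f_\theta(y_i)\prod_{i=k+1}^n\big[\mathds{1}(y_i\ge c_i)f_\theta(y_i)\big]\prod_{i=n+1}^\infty f_\theta(y_i)$, by checking that the marginal over $\Theta$ is indeed $\pi(\theta\mid\mathcal{D}_n)$, that conditionally on $\Theta$ the $Y_{k+1:n}$ are independent with density $f^{c_i}_\theta$, and that the $Y_{n+1:\infty}$ are conditionally i.i.d.\ $f_\Theta$ — the last point using the fact that $\pi(\theta\mid y_{1:n})$ built from the \emph{imputed} values coincides, after integrating out $Y_{k+1:n}$, with a de Finetti mixing measure for the full sequence.

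The second step is to apply the classical Doob theorem to the now-complete sequence. Since conditionally on $\Theta=\theta$ the observations $Y_1,\dots,Y_k,Y_{k+1},\dots$ are i.i.d.\ $f_\theta$ (with $Y_{1:k}$ fixed at their observed values, which lie in the support of $f_\theta$ for $\pi(\cdot\mid\mathcal D_n)$-a.e.\ $\theta$), and since $f_\theta$ satisfies the identifiability and measurability conditions of \citet{Doob1949}, the posterior means $\bar\theta_N = E[\Theta\mid Y_{1:N}]$ form a uniformly integrable martingale (using $E[|\Theta|\mid\mathcal D_n]<\infty$) that converges a.s.\ to $E[\Theta\mid Y_{1:\infty}]$. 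Identifiability then forces $E[\Theta\mid Y_{1:\infty}]=\Theta$ a.s., because the empirical distribution of $Y_{1:\infty}$ converges to $F_\Theta$ which pins down $\Theta$. This gives \eqref{eq:supp_doob_surv} with the convergence holding $P^\infty(\cdot\mid\mathcal D_n)$-almost surely.

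The main obstacle — and the step deserving the most care — is the first one: justifying that the two-stage sampling scheme (impute $Y_{k+1:n}$ from $p(y_{k+1:n}\mid\mathcal D_n)$, then predictively resample $Y_{n+1:\infty}$ from the Bayesian posterior predictive built on the imputed data) genuinely reconstructs the single coherent joint law $P^\infty$ under which $(\Theta,Y_{k+1:\infty})$ is the correct mixture of i.i.d.\ sequences. One must check the measure-theoretic bookkeeping that the regular conditional distributions compose correctly — i.e.\ that $p(y_{k+1:n}\mid\mathcal D_n)\,p(y_{n+1:\infty}\mid y_{1:n})$ integrated against $\pi(\theta\mid y_{1:n})$ reproduces $\pi(\theta\mid\mathcal D_n)\prod f^{c_i}_\theta(y_i)$ — and in particular that conditioning on the \emph{event} $\{Y_{k+1:n}\ge c_{k+1:n}\}$ commutes appropriately with the later predictive updates, which is where the non-informative censoring assumption (Appendix~\ref{sec:supp_noninf_cens}) is used. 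Once this identification is in place, invoking the standard Doob theorem is essentially immediate.
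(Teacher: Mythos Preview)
Your approach is correct in spirit but does more work than necessary, and contains a small misstatement. The paper's proof bypasses the need to verify the global de Finetti structure of $P^\infty(\cdot\mid\mathcal D_n)$ altogether. Instead, it conditions on the realised value $y_{k+1:n}$ of the imputed block: for each such value with $E[|\Theta|\mid y_{1:n}]<\infty$, the \emph{uncensored} Doob theorem applies directly to $P^\infty(\cdot\mid y_{1:n})$, giving $\lim_N\bar\theta_N=\Theta$ a.s.\ there. The tower rule $E\big[E[|\Theta|\mid y_{1:n}]\,\big|\,\mathcal D_n\big]=E[|\Theta|\mid\mathcal D_n]<\infty$ ensures the integrability hypothesis holds for $P(\cdot\mid\mathcal D_n)$-almost all $y_{k+1:n}$, and one then simply integrates the indicator $\mathds{1}\{\lim_N\bar\theta_N=\Theta\}$ against $P(y_{k+1:n}\mid\mathcal D_n)$ to obtain \eqref{eq:supp_doob_surv}. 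This is essentially a two-line argument once the factorisation \eqref{eq:factorized_joint} is in place; there is no need to check how conditioning on the censoring event commutes with later predictive updates, which you flag as the main obstacle.

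On the misstatement: under $P^\infty(\cdot\mid\mathcal D_n)$, conditionally on $\Theta=\theta$ the imputed $Y_{k+1:n}$ are drawn from the \emph{truncated} densities $f^{c_i}_\theta$, not from $f_\theta$, so the full sequence $Y_{1:\infty}$ is not conditionally i.i.d.\ as you claim in your second step. Your argument can be repaired because the tail $Y_{n+1:\infty}$ \emph{is} conditionally i.i.d.\ $f_\theta$, and that alone suffices for the empirical distribution to recover $F_\theta$ and hence for $\Theta$ to be $\sigma(Y_{1:\infty})$-measurable. But the paper's conditioning route avoids having to notice this at all, and also makes the role of the integrability assumption (via the tower rule) more transparent than the uniform-integrability argument you sketch.
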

\begin{proof}
For each $y_{k+1:n} \in \mathbb{R}^{n-k}$ such that $E[|\Theta| \mid y_{1:n}] < \infty$, Doob's consistency theorem gives us
\begin{equation}\label{eq:supp_doob}
\lim_{N\to \infty} \bar{\theta}_N = \Theta \quad \textnormal{a.s. }P^\infty(\cdot \mid y_{1:n}).
\end{equation}
Note that the tower rule gives us
$$E[E[|\Theta| \mid y_{1:n}] \mid \mathcal{D}_n] = E[|\Theta| \mid\mathcal{D}_n] < \infty,$$
so $E\left[|\Theta| \mid y_{1:n}\right] < \infty$ for $P(\cdot \mid \mathcal{D}_n)$-almost all $y_{k+1:n}$. 
This implies that (\ref{eq:supp_doob}) holds for $P(\cdot \mid \mathcal{D}_n)$-almost all $y_{k+1:n}$. Finally, we have the following:
\begin{equation*}
    \begin{aligned}
     P^\infty\left(\lim_{N} \bar{\theta}_N =  \Theta \mid \mathcal{D}_n\right) = &\int  \, \mathds{1} \left\{\lim_N \bar{\theta}_N = \Theta \right\} \, dP^\infty(\Theta, y_{k+1:\infty} \mid \mathcal{D}_n) \\
     =&\int \underbrace{\int \, \mathds{1} \left\{\lim_N \bar{\theta}_N = \Theta \right\} \, dP^\infty(\Theta, y_{n+1:\infty} \mid y_{1:n})}_{=1 \text{ a.s. }P(y_{k+1:n} \mid \mathcal{D}_n)} \, dP(y_{k+1:n} \mid \mathcal{D}_n) \\= & ~1,
    \end{aligned}
\end{equation*}
 which is exactly statement (\ref{eq:supp_doob_surv}). 
\end{proof}
{
\subsection{Conditionally Identically Distributed Sequences for Right-censored Observations}\label{sec:supp_cid}
In this section, we first review the c.i.d. properties in the fully observed case, before discussing the implications of the c.i.d. property when there is right-censoring in the observations. In the fully observed case, assume that the sequence of \glspl*{rv} $[Y_{n+1},\ldots,Y_N]$ is c.i.d., where $P_i(y)$ is the usual predictive cumulative distribution function of $Y_{i+1}$ conditional on $Y_{1:i}$ for $i \geq n$. Following \citet{Berti2004,Fong2021}, the sequence of predictive cumulative distribution functions is a martingale as it satisfies
$$
E[P_i(y) \mid y_{1:{i-1}}] = P_{i-1}(y)
$$
almost surely for each $y \in \mathbb{R}$, for $i \geq n$. We highlight that $P_i$ is the predictive distribution conditional on $y_{1:i}$, so the expectation is over $Y_{i}$. From the properties of the c.i.d. sequence \citep[Lemma 2.1, 2.4]{Berti2004}, we have that the predictive distribution converges weakly to a random probability distribution, $P_\infty$, almost surely, that is
$$
P_N(y) \to P_\infty(y)  \quad \textnormal{a.s. }P^\infty(\cdot \mid y_{1:n})
$$
for each $y \in \mathbb{R}$, where $P^\infty( \cdot \mid y_{1:n})$ is over $Y_{n+1:\infty}$. Furthermore, we have that $E[P_\infty(y) \mid y_{1:n}] = P_n(y)$ almost surely for each $y \in \mathbb{R}$, which is the unbiasedness coherence condition from \citet{Fong2021}. The  empirical distribution,
$$
F_N(y) = \frac{1}{N}\left\{\sum_{i=1}^n  \mathds{1}(y_i \leq y) + \sum_{i=n+1}^N \mathds{1}(Y_i \leq y) \right\},
$$
also satisfies the same property \citep[Theorem 2.2]{Berti2004}, that is
$$
F_N(y) \to F_\infty(y) \quad \textnormal{a.s. }P^\infty(\cdot \mid y_{1:n})
$$
for each $y \in \mathbb{R}$, and in fact $F_\infty = P_\infty$ almost surely. 

Returning to the right-censored case where $Y_{k+1:n}$ is in fact random and drawn from $p(y_{k+1:n} \mid \mathcal{D}_n)$, we note that the above convergence holds for each $y_{k+1:n}$. We can therefore write
\begin{equation*}
    \begin{aligned}
    P^\infty\left(\lim_{N} F_N(y) = F_\infty(y) \mid \mathcal{D}_n \right) = &\int  \, \mathds{1} \left\{\lim_{N} F_N(y) = F_\infty(y) \right\} \, dP^\infty(y_{k+1:\infty} \mid \mathcal{D}_n) \\
     =&\int \underbrace{\int \, \mathds{1} \left\{\lim_N  F_N(y) = F_\infty(y) \right\} \, dP^\infty(y_{n+1:\infty} \mid y_{1:n})}_{=1} \, dP(y_{k+1:n} \mid \mathcal{D}_n) \\= & ~1.
    \end{aligned}
\end{equation*}
A similar result can be shown for the limiting predictive distribution $P_\infty$. From the above, we see that the limiting empirical distribution exists under the imputation and predictive resampling scheme due to the martingale property of the c.i.d. sequence.  We can thus compute $\theta(F_\infty)$ to obtain a posterior sample from the martingale posterior, $\pi_\infty(\theta  \mid \mathcal{D}_n)$, where we use the subscript in $\pi_\infty$ to distinguish from the regular Bayesian posterior. Note here that  $F_\infty$ is unknown but we can obtain samples through predictive resampling, in contrast to the parametric case of Doob's theorem where $\bar{\theta}_\infty = \Theta$ is known.

Interestingly, the unbiasedness coherence condition of \citet{Fong2021} is also satisfied in the right-censored case. We can compute the posterior mean of $P_\infty$ as
\begin{equation*}
    \begin{aligned}
    E\left[P_\infty(y) \mid \mathcal{D}_n \right] &= E\left[ E\left[ P_\infty(y) \mid y_{1:n}\right] \mid \mathcal{D}_n \right]\\
    &= E\left[P_n(y) \mid \mathcal{D}_n\right]\\
    &= P(Y_{n+1} \leq y \mid \mathcal{D}_n),
    \end{aligned}
\end{equation*}
which is the cumulative distribution function of (\ref{eq:censored_pred}).
Note that the outer expectation in the first line is over $Y_{k+1:n} \sim p(y_{k+1:n} \mid \mathcal{D}_n)$, whereas the inner expectation is over $Y_{n+1:\infty} \sim p(y_{n+1:\infty} \mid y_{1:n})$. Once again, the posterior mean of $P_\infty$ is our best estimate of the distribution of $Y_{n+1}$ given $\mathcal{D}_n$, and our imputation and predictive resampling scheme has incurred no bias. To summarize, we inherit the nice coherency properties of \citet{Fong2021} as we have the c.i.d. sequence conditional on the imputed $Y_{k+1:n}$. 

A final point is that even with $Y_{k+1:n}$ marginalized out, the sequence $Y_{n+1:\infty}$ remains c.i.d. as the marginalized predictive distribution is just a mixture of the fully observed $P_i$. However, the predictive distribution when $Y_{k+1:n}$ is marginalized is not tractable, which is why we introduce the sequential  Monte Carlo scheme in our paper. Eliciting the marginalized predictive directly would be an interesting avenue of future work.

}
\section{METHODOLOGY \label{sec:supp_meth} }

\subsection{Predictive Resampling in the Uncensored Case}
Predictive resampling for the uncensored case, as described in \citet{Fong2021}, is given by Algorithm \ref{alg:predictive_resampling}. A slight intricacy is that the limiting empirical $F_\infty$ may be continuous but $F_N$ is always discrete. We opt instead to use the final random predictive $P_N$ as an estimate of the limiting empirical $F_\infty$, as it can be continuous and in fact converges to $F_\infty$ in the limit of $N \to \infty$ \citep{Berti2004} as discussed in Appendix \ref{sec:supp_cid}. A martingale posterior sample can then be computed as $\theta_N = \theta(P_N)$.

\begin{figure}[ht]
  \centering
  \begin{minipage}{.55\linewidth}
\begin{algorithm}[H]\label{alg:predictive_resampling}
\DontPrintSemicolon
  \SetAlgoLined
  {Compute $p_n$ from $y_{1:n}$}\;
  \For{$j \gets 1$ \textnormal{\textbf{to}} $B$}{
  \For{$i \gets n+1$ \textnormal{\textbf{to}} $N$} {
  Sample $Y_{i}  \sim {P}_{i-1}$\;
  Update  $P_{i} \mapsfrom \left\{P_{i-1}, Y_{i}\right\} $\;
  }
 Evaluate   ${\theta}^{(j)}_N =\theta(P_N)$ \;}
 Return $\{\theta_N^{(1)},\ldots,\theta_N^{(B)} \}$\;
\caption{Predictive Resampling \citep{Fong2021} }
\end{algorithm}
  \end{minipage}
\end{figure}

\subsection{Copula Updates}
The copula update for the first step of the \gls*{dpmm} has been derived previously in \citet{Hahn2018} and \citet{Fong2021}. We will restate the key details that are not included in the main paper here.

From the main paper, the copula update for the densities is 
\begin{equation}\label{eq:supp_positive_update}
\begin{aligned}
    p_{i+1}(y) &= \left[ 1-\alpha_{i+1} + \alpha_{i+1}d_{a}\left\{P_i(y),P_i(y_{i+1}) \right\}\right] p_i(y).
\end{aligned}
\end{equation}
If the kernel of the \gls*{dpmm} has density $f_\theta(y)$, and the base measure of the DP has centering measure $\pi(\theta)$, then the bivariate copula density is
\begin{equation}\label{eq:supp_bivariate_copula_density}
d_a(u,v) = \frac{\int \, f_\theta\{P_0^{-1}(u) \} \, f_\theta\{P_0^{-1}(v)\} \, \pi(\theta) \, d \theta}{p_0\{P_0^{-1}(u)\} \, p_0\{P_0^{-1}(v)\} },
\end{equation}
where $p_0(y) = \int f_\theta(y) \, \pi(\theta) \, d\theta$ and $a$ is a hyperparameter that depends on the specification of the likelihood and prior. We then have $P_0(y) = \int^y p(y')\, dy'$ and $P_0^{-1}$ is the inverse \gls*{cdf}. 

Note that the update (\ref{eq:supp_positive_update}) requires the \gls*{cdf} $P_i(y)$. Fortunately this update is typically tractable, and involves integrating (\ref{eq:supp_positive_update}):
\begin{equation*}
    \begin{aligned}
    P_{i+1}(y) &= (1-\alpha_{i+1}) \, P_i(y) + \alpha_{i+1} \int^y d_a\{P_i(y'), P_i(y_{i+1})\} \, p_i(y') \, dy'\\
    &= (1-\alpha_{i+1}) \, P_i(y) + \alpha_{i+1} \int_0^{P_i(y)} d_a\{u', P_i(y_{i+1})\} \, du'\\
    &=  (1-\alpha_{i+1}) \, P_i(y) + \alpha_{i+1} \, I_a\{ P_i(y),P_i(y_{i+1})\}.
    \end{aligned}
\end{equation*}

The second line follows from the change of variables $u' = P_i(y')$, and we have that 
\begin{equation}\label{eq:supp_partial_copula}
I_a(u,v) = \int_0^u d_a(u',v) \, du'.
\end{equation}
If $\pi$ is conjugate to $f_\theta$, then the forms of $I_a$ and $d_a$ are typically tractable.


\subsection{Exponential Copula Update}\label{sec:supp_exp_copula}
In this section, we derive the copula density corresponding to the \gls*{dpmm} with the exponential kernel and gamma centering measure, that is
 \begin{equation}\label{eq:supp_exp_likelihood}
    f_\theta(y) = \theta \exp\left(-\theta y \right), \quad \pi(\theta) = \text{Gamma}(\theta \mid a,b)
 \end{equation} 
 for $y \geq 0$. We can derive the copula by considering
\begin{equation}
\begin{aligned}
\int f_\theta(y)\, f_\theta(y_1)\, d\pi(\theta)
&= \frac{b^{a}}{\Gamma(a)} \int_0^\infty \theta^{a+1} \exp\left\{-(b+ y+ y_1) \theta \right\} d\theta \\ 
&= \frac{b^{a}}{\Gamma(a)\left(b+y+y_1 \right)^{a + 2}} \int_0^\infty x^{a+1} \exp\left(-x \right) dx \\
&= \frac{a(a+1)}{b^2}\left(1 + \frac{y+y_1}{b} \right)^{-(a + 2)}
\end{aligned}
\end{equation}
where in the second line we have used the substitution $x = \theta(b+y+y_1)$, and the third line uses ${\Gamma(a+2) = a\,(a+1)\,  \Gamma(a)}$. We also have
\begin{equation}\label{eq:supp_lomax_pdf}
p_0(y)= \int f_\theta(y)\, d\pi(\theta) =\frac{a}{b}\left(1 + \frac{y}{b} \right)^{-(a+1)}
\end{equation}
which is the $\text{Lomax}(a,b)$ density. The copula density then takes the form
\begin{equation*}
\begin{aligned}
d_{a,b}(y,y_1) &= \frac{a + 1}{a} \frac{ \left(1 + \frac{y}{b} \right)^{a+1}{\left(1 + \frac{y_1}{b} \right)^{a+1}}}{\left(1 + \frac{y+y_1}{b} \right)^{a+2}}\cdot
\end{aligned}
\end{equation*}
We would like the density as a function of $(u,v)$. To this end, note that the marginal \gls*{cdf} and inverse \gls*{cdf} are
\begin{equation}\label{eq:supp_lomax_cdf}
P_0(y) = 1 - \left(1+ \frac{y}{b}\right)^{-a},\quad P_0^{-1}(u) = b\left\{(1-u)^{-\frac{1}{a}} - 1 \right\}.
\end{equation}

Finally, this gives us the copula density
\begin{equation}
d_a(u,v) = \frac{a + 1}{a} \frac{(1-u)^{-\frac{a + 1}{a}}(1-v)^{-\frac{a + 1}{a}}}{\left\{(1-u)^{-\frac{1}{a}} + (1-v)^{-\frac{1}{a}} -1	\right\}^{a + 2}}
\end{equation}
where $u = P(y)$ and $v = P(y_1)$. 

To derive $I_a(u,v)$, we compute the integral in (\ref{eq:supp_partial_copula}). Substituting $x=(1-u')^{-\frac{1}{a}}+(1-v)^{-\frac{1}{a}}-1$ gives us
$$dx = du' \times \frac{1}{a}(1-u')^{-\frac{a+1}{a}},  \quad c_1 = (1-v)^{-\frac{1}{a}}, \quad c_2 =(1-u)^{-\frac{1}{a}} + (1-v)^{-\frac{1}{a}} -1. $$ 
Plugging this into  (\ref{eq:supp_partial_copula}) gives us
\begin{equation*}
    \begin{aligned}
    I_a(u,v) &= (a+1)(1-v)^{-\frac{a+1}{a}}\int_{c_1}^{c_2} x^{-(a+2)} \, dx\\
    &= 1- \frac{(1-v)^{-\frac{a+1}{a}}}{\left\{(1-u)^{-\frac{1}{a}} + (1-v)^{-\frac{1}{a}} 	-1\right\}^{a + 1}}.
    \end{aligned}
\end{equation*}
In practice, we set $b = 1$ and so $p_0(y) = \text{Lomax}(a,1)$.

\subsection{Log-normal Copula Update}
We derive the copula update for the \gls*{dpmm} with a log-normal kernel with a normal base measure, that is
$$
f_\theta(y) = \frac{\mathcal{N}(\log(y) \mid \theta, 1)}{y}, \quad \pi(\theta) = \mathcal{N}(\theta \mid 0,\tau^{-1})
$$
for $y\geq 0$. Working with $z = \exp(y)$, we get
\begin{equation*}
\begin{aligned}
\int f_\theta(y)\, f_\theta(y_1)\, d\pi(\theta)
&= \frac{\int \mathcal{N}(z \mid \theta,1) \, \mathcal{N}(z_1 \mid \theta,1) \, d\mathcal{N}(\theta \mid 0,1)}{\log(z) \, \log(z_1)}\\
\end{aligned}
\end{equation*}
Similarly, we have
$$p_0(y) =
\frac{\int \mathcal{N}(z \mid \theta,1)\,  d \mathcal{N}(\theta \mid 0,1)}{\log(z)}\cdot $$
Plugging the above into (\ref{eq:supp_bivariate_copula_density}) gives us the bivariate Gaussian copula density $c_\rho(u,v)$, as the $\log(z)$ terms cancel out. We write $c_\rho$ instead of $d_a$ to remain consistent with \citet{Hahn2018} and \citet{Fong2021}.

The Gaussian copula density is
\begin{equation}\label{eq:supp_gaussian_copula}
c_\rho(u,v) = \frac{\mathcal{N}_2\left\{ \Phi^{-1}(u),\Phi^{-1}(v)\mid 0,1,\rho\right\}	}{\mathcal{N}\{\Phi^{-1}(u) \mid 0,1\} \,\mathcal{N}\{\Phi^{-1}(v) \mid 0,1\}}
\end{equation}
where $\rho \in (0,1)$, $\Phi$ is the standard normal \gls*{cdf}, and $\mathcal{N}_2(0,1,\rho)$ is the bivariate normal density with mean 0, variance 1 and correlation $\rho$. We can similarly compute $H_\rho(u,v) = \int_0^u c_\rho(u',v) \,du'$, which is
\begin{equation}
H_\rho(u,v) = \Phi\left\{ \frac{\Phi^{-1}(u) - \rho \Phi^{-1}(v)}{\sqrt{1-\rho^2}}\right\}.
\end{equation}

Although this implies the same copula update as the one on $\mathbb{R}$ as introduced in \citet{Hahn2018}, the key difference is $p_0(y)$, which can be shown to be
$$
p_0(y) = \text{Log-normal}\left(y \mid 0, \frac{1}{1-\rho}\right).
$$

\subsection{Ordering for Copula Method}\label{sec:supp_ordering}
\citet{Kong1994} suggests ordering the data such that the observed data comes before the missing data, which is to ensure the proposal is close to the target for importance weight stability. In the right-censoring case however, we have found that this intuition does not extend. In practice, randomizing the order of data greatly increases the \gls*{ess} in comparison to ordering the uncensored data before the censored data. Although one can average the results over different permutations, we find that a single permutation works well in practice. In the random order case, the \gls*{IS} weights have a slightly different form to take into account the observed data points between censored data points - this is provided in Algorithm 1 and derived in Section \ref{sec:supp_IS_deriv}.

We postulate that ordering the data is undesirable due to the nature of right-censoring: as the uncensored $y_{1:k}$ will tend to take on smaller values, a density estimate constructed from $y_{1:k}$ will not be sufficiently right-skewed compared to the target distribution, which has support on the larger values $y_{k+1:n}$ that have been right-censored. This results in the proposal being too light-tailed with respect to the target distribution, leading to {IS} weights with high variance. We recommend randomizing the order as it results in a heavier-tailed proposal, and this works much better in practice. 

We demonstrate this in the parametric example of Section \ref{sec:sim}, where the joint density on $Y_{1:N}$ is exchangeable, so the ordering only affects importance weight stability. We compare the  \gls*{ess} of the \gls*{IS} weights for random data ordering versus the ordering $\{y_{1:k}, Y_{k+1:n}\geq c_{k+1:n}\}$, which is computed as 
\begin{equation}\label{eq:supp_ESS}
    \text{ESS}(w^{(1:B)}) = 1/\sum_{j=1}^B{\{w^{(j)}\}^2}\cdot 
\end{equation}
We carry out Algorithm \ref{alg:joint_sampling} followed by predictive resampling \textit{without} the \gls*{smc} resampling steps for the two orderings. As we see in Figures \ref{fig:sim_param_IS}a and \ref{fig:sim_param_IS_ord}a, the random ordering case is quite close to the truth even without \gls*{smc} resampling, but the uncensored/censored ordering case is a poor approximation. As expected, the \gls*{ess} for the random and uncensored/censored cases are 967 and 8  respectively for $B = 2000$. To visualize the cause, we see in Figure \ref{fig:sim_param_IS_ord}b that the proposal has poor support over the true posterior of $\theta$ as it is peaked and not sufficiently right-skewed. On the other hand, the random ordering case proposal in Figure \ref{fig:sim_param_IS}b has a  heavy right tail - we do not plot the true posterior here as it is significantly more peaked than the proposal. 

\begin{figure}[!ht]
\centering
 \includegraphics[width=0.97\textwidth]{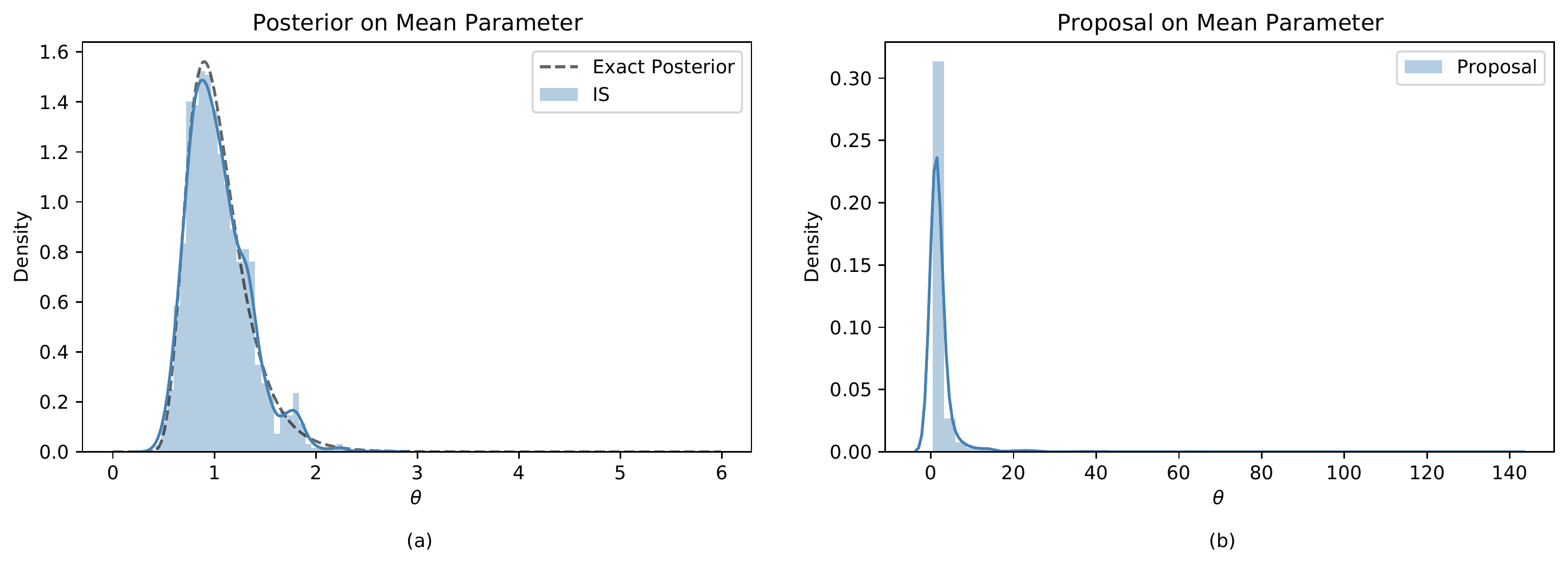}\vspace{-5mm}
  \caption{Random ordering:  (a) Martingale posterior generated via  Algorithm~\ref{alg:joint_sampling} (without \gls*{smc} resampling) and predictive resampling for $\theta$; (b) Proposal distribution before \gls*{IS} reweighting}\label{fig:sim_param_IS}
\end{figure}

\begin{figure}[!ht]
\centering
 \includegraphics[width=0.97\textwidth]{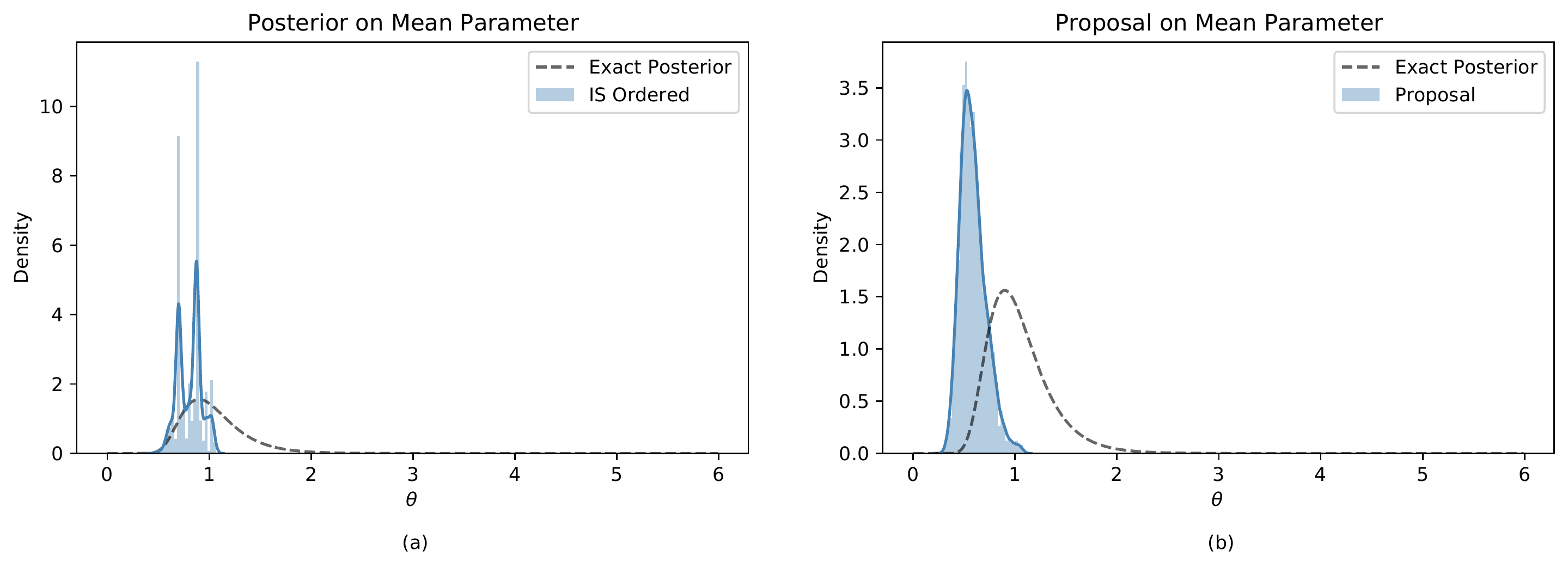}
 \vspace{-5mm}
   \caption{Uncensored/censored ordering:  (a) Martingale posterior generated via  Algorithm~\ref{alg:joint_sampling} (without \gls*{smc} resampling) and predictive resampling for $\theta$; (b) Proposal distribution before \gls*{IS} reweighting}\label{fig:sim_param_IS_ord} \vspace{-2mm}
\end{figure}

\newpage

\subsection{Derivation of Algorithm 1 }\label{sec:supp_IS_deriv}

As discussed in the main paper and above, the \gls*{IS} weights take on a slightly different form under random ordering, which we now derive. For a dataset $\mathcal{D}_n$, denote the indices of observed data points as $\mathcal{I}_{o}$ and censored data points as $\mathcal{I}_c$, so that $\mathcal{I}_o \cup \mathcal{I}_c = \lbrace 1, \dots, n \rbrace$ and $\mathcal{I}_o \cap \mathcal{I}_c = \O$, and $\mathcal{D}_n = \{y_{\mathcal{I}_o}, Y_{\mathcal{I}_c}\geq c_{\mathcal{I}_c}\}$. Our sequential imputation scheme in Algorithm 1 gives the proposal density 
$$
q(y_{\mathcal{I}_c}) = \prod_{i \in \mathcal{I}_c} p(y_i \mid  Y_i \geq c_i, y_{1:i-1}).
$$
Our target however is the conditional density
$$
p(y_{\mathcal{I}_c} \mid Y_{\mathcal{I}_c} \geq c_{\mathcal{I}_c}, y_{\mathcal{I}_o}) \propto {p(y_{\mathcal{I}_c},  Y_{\mathcal{I}_c} \geq c_{\mathcal{I}_c},y_{\mathcal{I}_o})}.
$$
We can factorize the mixed joint density into
\begin{equation*}
   \begin{aligned}
   {p(y_{\mathcal{I}_c},  Y_{\mathcal{I}_c} \geq c_{\mathcal{I}_c},y_{\mathcal{I}_o})} 
   &= \prod_{i \in \mathcal{I}_c} {p(y_i, Y_i \geq c_i \mid y_{1:i-1})} \, \prod_{j \in \mathcal{I}_o} p(y_j \mid y_{1:j-1}) \\
   &= \prod_{i \in \mathcal{I}_c} p(y_i \mid  Y_i \geq c_i, y_{1:i-1}) \, P(Y_i \geq c_i \mid y_{1:i-1})\, \prod_{j \in \mathcal{I}_o} p(y_j \mid y_{1:j-1}) \\
   &= q(y_{\mathcal{I}_c}) \, \prod_{i \in \mathcal{I}_c} P(Y_i \geq c_i \mid y_{1:i-1})\, \prod_{j \in \mathcal{I}_o} p(y_j \mid y_{1:j-1}).
   \end{aligned} 
\end{equation*}
Dividing the above by $q(y_{\mathcal{I}_c})$ gives us the unnormalized importance weights
\begin{equation}\label{eq:supp_IS_weights}
\begin{aligned}
w &=  \prod_{i \in \mathcal{I}_c} P(Y_i \geq c_i \mid y_{1:i-1})\, \prod_{j \in \mathcal{I}_o} p(y_j \mid y_{1:j-1})
\\&= \prod_{i=1}^n \left[\delta_i\,p(y_i \mid y_{1:i-1}) + (1-\delta_i)\,P(Y_i \geq c_i \mid y_{1:i-1})  \right]
\end{aligned}
\end{equation}
where $\delta_i = 1$ if $i \in \mathcal{I}_o$ and $\delta_i =0 $ if $i \in \mathcal{I}_c$. The above is exactly the importance weight in Algorithm 1.

\subsection{Selecting Hyperparameters}\label{sec:supp_hyperparam}

In the copula densities above, we set the bandwidth parameters $a$ or $\rho$ by maximizing the (mixed) joint `marginal likelihood' $p(\mathcal{D}_n)$. Assuming no \gls*{smc} resampling steps for now, we can estimate $p(\mathcal{D}_n)$ through \gls*{IS}:  
$$
\widehat{p}(\mathcal{D}_n) = \sum_{j=1}^B w^{(j)}
$$
where $w^{(j)}$ are the unnormalized \gls*{IS} weights from (\ref{eq:supp_IS_weights}). This can be shown to be a valid estimate, as we are approximating the expectation
\begin{equation*}
    \begin{aligned}
    \int \prod_{i \in \mathcal{I}_c} P(Y_i \geq c_i \mid y_{1:i-1})\, \prod_{j \in \mathcal{I}_o} p(y_j \mid y_{1:j-1}) \, q(y_{\mathcal{I}_c}) \, dy_{\mathcal{I}_c} &= \int p(y_{\mathcal{I}_c},  Y_{\mathcal{I}_c} \geq c_{\mathcal{I}_c},y_{\mathcal{I}_o}) \, dy_{\mathcal{I}_c}\\
    &= {p(Y_{\mathcal{I}_c} \geq c_{\mathcal{I}_c},y_{\mathcal{I}_o})},
    \end{aligned}
\end{equation*}
which is exactly $p(\mathcal{D}_n)$. When \gls*{smc} resampling steps are present, we can then approximate the ratio $Z_i /Z_{i-1}$ at each time step and compute the product to get $Z_n$ (\cite[Section~3.5]{Doucet2009}). Here $Z_i = p(\mathcal{D}_i)$, where $\mathcal{D}_i$ are the data points up to and including datum $i$. 

To maximize $\widehat{p}(\mathcal{D}_n)$, we optimize across a pre-specified grid of hyperparameter values. We note that, although a gradient-based approach may be possible, it is likely to be slow due to the large number of particles $B$ and potentially unstable due to the need to differentiate through an \gls*{IS} estimate. 

\subsection{Initializaton and Standardization}\label{sec:supp_init_standard}

As motivated by the copula derivations above, we initialize the exponential copula update with ${p_0(y) = \text{Lomax}(y \mid a,1)}$, where $a$ is the same hyperparameter as the bandwidth. We set $a$ in an adaptive way, as a default value is difficult to set -  we cannot gauge the tail behaviour from the observed sample in the presence of right-censoring. An equivalent argument applies for $\rho$ in the log-normal copula update, which controls the variance of $p_0 = \text{Log-normal}(y \mid 0, 1/(1-\rho))$.

However, to prevent a wildly inappropriate $p_0$, we opt to normalize the observed times in a heuristic manner to ensure the times are of the right order of magnitude. To illustrate this, we briefly use the alternative notation of observed times and censoring indicators $\{t_i,\delta_i\}_{i = 1:n}$ for convenience, which corresponds one-to-one to the notation $\mathcal{D}_n = \{y_{\mathcal{I}_o}, Y_{\mathcal{I}_c}\geq c_{\mathcal{I}_c}\}$. From the Bayesian model in (\ref{eq:supp_exp_likelihood}), we see that the prior expected value of $\theta$ is $a/b$. The hyperparameter $a$ is a prior pseudo-count, so  we aim for a default target value of $a \approx 1$. As we set $b = 1$, this suggests that we are aiming for a target $E[\theta] \approx 1$ under the exponential model. Finally, we highlight that the \gls*{mle} of the rate $\theta$  for the exponential sampling density takes the form
\begin{equation*}
\widehat{\theta} = \frac{\sum_{i=1}^n \delta_i}{\sum_{i=1}^n t_i}\cdot
\end{equation*}
We thus opt to multiply the times $t_{1:n}$ by $\widehat{\theta}$ above  to ensure an \gls*{mle} of $\theta$ of 1. In the log-normal case, the \gls*{mle} is not tractable unlike in the exponential case. As a result, we prefer to also multiply by $\widehat{\theta}$ above in this case which works well in practice. 

\subsection{Diagnostics}

We now briefly discuss the assessment of the computational accuracy of our method, as we are rely on Monte Carlo and truncation approximations. We provide these diagnostics for our experiments in Section \ref{sec:supp_results}. For the simulation of $y_{\mathcal{I}_c}$, we report the usual diagnostics for \gls*{smc} - that is, we plot the \gls*{ess} as computed in (\ref{eq:supp_ESS}) against time in order to observe the number of resample steps. We also track the number of unique particles of ${y}_{\min(\mathcal{I}_c)}$ as a measure of particle degeneracy. See \citet{Doucet2009} for more details. 

To diagnose the convergence of predictive resampling in the nonparametric case, we track the $L_1$ distance between the starting \gls*{cdf} $P_n$ and forward simulated \gls*{cdf} $P_N$, i.e. we compute $\int |P_N(y) - P_n(y) | \, dy$ which we can approximate numerically. Note that this is the 1-Wasserstein metric. We use this as the survival function is of primary interest; we expect the Wasserstein-1 distance to converge to a constant as $N\to \infty$. In the parametric case, we simply observe the value of $\bar{\theta}_N$, which will also converge to a constant from Doob's result.

\subsection{Copula Regression}
For conditional density estimation, the copula update takes on the form 
\begin{equation}\label{eq:supp_conditreg_conditional}
\begin{aligned}
p_{i+1}(y \mid \mathbf{x}) =\{1-\alpha_{i+1}(\mathbf{x},\mathbf{x}_{i+1}) ~ + \alpha_{i+1}(\mathbf{x},\mathbf{x}_{i+1})\, c_{\rho}\left(q_i,r_i\right)\}\, p_i(y\mid \mathbf{x}),
\end{aligned}
\end{equation}
where $q_i = P_i(y \mid \mathbf{x}), \quad r_i = P_i(y_i \mid \mathbf{x}_i)$. The update above corresponds to the conditional density update in the multivariate copula update \citep{Fong2021}. A simplification is also suggested in \citet{Fong2021} for the form of $\alpha_{i+1}(\mathbf{x},\mathbf{x}_{i+1})$, which is
 \begin{equation}\label{eq:supp_condit_alpha}
\alpha_{i}(\mathbf{x},\mathbf{x}') = \frac{\alpha_{i}\prod_{j=1}^d  c_{\rho_{x}} \left\{\Phi\left(x^j\right),\Phi\left(x'^j\right)\right\}}{1- \alpha_{i} + \alpha_{i}\prod_{j=1}^d   c_{\rho_{x}} \left\{\Phi\left(x^j\right),\Phi\left(x'^j\right)\right\}}.
\end{equation}
Here, $c_\rho$ is the Gaussian copula density as in (\ref{eq:supp_gaussian_copula}). We initialize  $p_0(y \mid \mathbf{x})  = p_0(y)$ which may be the Lomax or log-normal density as described above, independent of $\mathbf{x}$. We also standardize the data in the same way as in Section \ref{sec:supp_init_standard}, ignoring the covariates. For predictive resampling, as mentioned in the main paper, we draw $X_{n+1:N}$ through the Bayesian bootstrap. 
\vspace{5mm}
\newpage
\section{EXPERIMENTS \label{sec:supp_results} }

In this section, we provide further details on each of the individual experiments. 

\subsection{Simulated Data}
For the simulated data example, the aim was to show the equivalence between predictive resampling and posterior sampling in the \textit{parametric} case. Our (well-specified) model is 
 \begin{equation*}
    f_\theta(y) = \frac{\exp\left(- y/\theta \right)}{\theta}, \quad \pi(\theta) = \text{Inverse-gamma}(\theta \mid a_0,b_0)
 \end{equation*} 
where we have reparametrized so that $\theta$ is the mean of the population. Once again, it is convenient to use the $\{t_i,\delta_i\}_{i = 1:n}$ notation for the observed data. Under non-informative censoring, the posterior is simply 
$\pi(\theta \mid \mathcal{D}_n) = \text{IG}(a_n, b_n)$, where
$
a_n = a_0 + \sum_{i=1}^n \delta_i, \quad b_n = b_0 + \sum_{i=1}^n t_i.
$
The posterior predictive is also analytically tractable as the $ \text{Lomax}(a_n,b_n)$ distribution, with density and \gls*{cdf} given in  (\ref{eq:supp_lomax_pdf}) and (\ref{eq:supp_lomax_cdf}). It is also helpful to derive the marginal likelihood, which takes on the form
\begin{equation*}
    \begin{aligned}
    p(\mathcal{D}_n) &= \int \prod_{i = 1}^n [\delta_i \, f_\theta(y_i)\, + (1-\delta_i) \, \bar{F}_\theta(c_i)] \, \pi(\theta) \, d\theta\\
    &= \frac{b_0^{-k}\,\Gamma(k + a_0)}{\Gamma(a_0)}{\left(1+ \frac{\sum_{i=1}^n t_i}{b_0}  \right)^{-(k + a_0)}}\\
    &= \frac{\Gamma(k+a_0)}{\Gamma(a_0)} \frac{b_0^{a_0}}{(b_0 + \sum_{i=1}^n t_i )^{k+a_0}}
    \end{aligned}
\end{equation*}
where $k = \sum_{i=1}^n \delta_i$. Setting $b_0 = 1$, we maximize the above using gradient descent to elicit $a_0$, yielding $a_0 = 1.46$ for our particular example. 

In Algorithm \ref{alg:joint_sampling}, for a censored datum $Y_i \geq c_i$, we wish to simulate  $Y_i \sim p_{i-1}^{c_i}$, where $p_{i-1}^{c_i}(y) = p(y \mid Y_i \geq c_i, y_{1:i-1})$. Once again, we work in the space of \gls*{cdf}s and draw
\begin{equation*}
    \begin{aligned}
U_{i} &\sim \mathcal{U}[P_{i-1}(c_{i}),1 ], \quad Y_{i} = P_{i-1}^{-1}(U_{i}).
\end{aligned}
\end{equation*}
In this case, we require $P_{i-1}^{-1}$, which is tractable and easy to compute, as given in  (\ref{eq:supp_lomax_cdf}). Updating the predictive then involves computing $a_i = a_{i-1}+ 1$ and $b_i = b_{i-1} + Y_i$, and the \gls*{IS} weight update involves the $\text{Lomax}(a_{i-1},b_{i-1})$ \gls*{cdf} at $c_i$.

To predictively resample $Y_{n+1:N}$, we draw
\begin{equation*}
    \begin{aligned}
U_{i} &\sim \mathcal{U}[0,1 ], \quad Y_{i} = P_{i-1}^{-1}(U_{i}),
\end{aligned}
\end{equation*}
followed by the same updates for $a_i$ and $b_i$. For the limiting parameter estimate, we utilize the posterior mean, which takes the form 
$$
\bar{\theta}_N = \frac{b_N}{a_N - 1}
$$
for the $\text{Inverse-gamma}(a_N,b_N)$ posterior. As an aside, note that $\bar{\theta}_N = \sum_{i=1}^N Y_i /N$ would also work as it is a strongly consistent estimator.

In Figure \ref{fig:sim_diagnostics}a, we plot the \gls*{ess} as Algorithm \ref{alg:joint_sampling} progresses. Although there are two resampling steps, resulting in a decrease in the number of unique particles each time, we still have approximately 600 particles at the end, which is sufficient for estimating $p(y_{n+1} \mid \mathcal{D}_n)$ accurately. In Figure \ref{fig:sim_diagnostics}b, we plot the paths of $\bar{\theta}_i$ for a few predictive resampling chains, where we see that $N = 2000 + n$ is sufficient for convergence.

\begin{figure}[!ht]
\centering
 \includegraphics[width=0.97\textwidth]{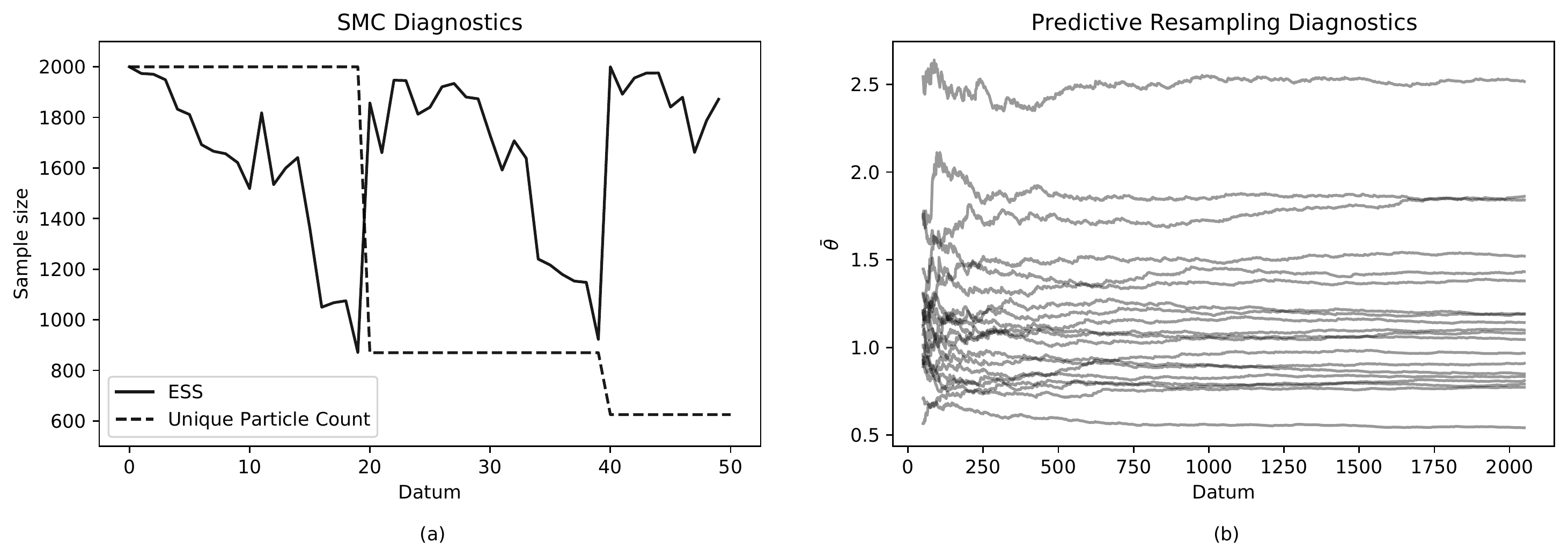}\vspace{-5mm}
  \caption{(a) \gls*{ess} and unique particles; (b) Trajectories of $\bar{\theta}$ from predictive resampling}\label{fig:sim_diagnostics}
\end{figure}
 \newpage
\subsection{Primary Biliary Cirrhosis}
For the copula method fit to the full dataset, we set the bandwidth $a$ by maximizing $\widehat{p}(\mathcal{D}_n)$ on the grid of $[0.5,0.6,0.7,0.8,0.9]$ and $[1.1,1.2,1.3,1.4,1.5]$ for the treatment and placebo datasets respectively. This gives $a = 0.8$ and $a = 1.2$ for the treatment and placebo respectively. For the cross-validation runs with the 50-50 train-test split, we however use the grid $[1.1,1.2,1.3,1.4,1.5]$ for both treatment and placebo; the difference in grid for placebo is due to the 50-50 train-split preferring a different value of $a$. For the baseline, we elicit a \gls*{dpmm} with the exponential kernel and Gamma prior, that is the \gls*{dpmm} with kernel and centering measure given in (\ref{eq:supp_exp_likelihood}) with $b = 1$. We fit the \gls*{dpmm} using the \texttt{R} package \texttt{dirichletprocess} \citep{ross2018}. Like the copula method, we set $a = 0.8$ and $a = 1.2$ in the Gamma centering measure of the \gls*{dpmm} for the treatment and placebo datasets respectively for both the full and the cross-validation fits. 

For computing Figure \ref{fig:pbc2_posterior} in the main paper, predictive resampling was carried out on a grid of size 149 (not 100 as incorrectly stated in the main paper) between 0 and 21 years. In Figure \ref{fig:pbc2_diagnostics}a, the \gls*{ess}/particle count plots show that no resampling steps were required.  In Figure \ref{fig:pbc2_diagnostics}b, the 1-Wasserstein distance between $P_n$ and $P_N$ has roughly converged at $N = 2000 + n$ forward steps as implemented in the paper.

For reference, we also plot the posterior mean and 95\% credible intervals of the random limiting density $p_N$ for the copula method and the \gls*{dpmm} in Figure \ref{fig:pbc2_pdf}. We see that while the posterior means are similar, the uncertainty bands are noticeably different.

\begin{figure}[!ht]
\centering
 \includegraphics[width=0.97\textwidth]{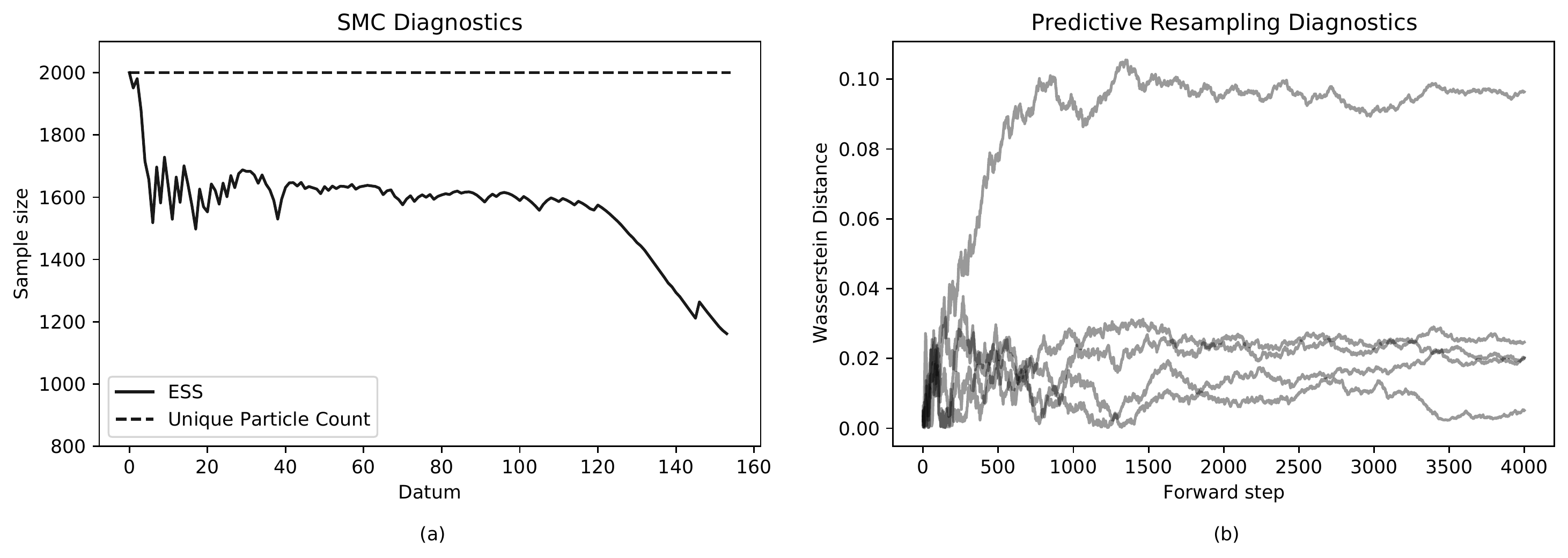}\vspace{-5mm}
  \caption{ (a) \gls*{ess} and unique particles; (b) Trajectories of Wasserstein-1 Distance from predictive resampling}\label{fig:pbc2_diagnostics}
\end{figure}

\begin{figure}[!ht]
\centering
 \includegraphics[width=0.97\textwidth]{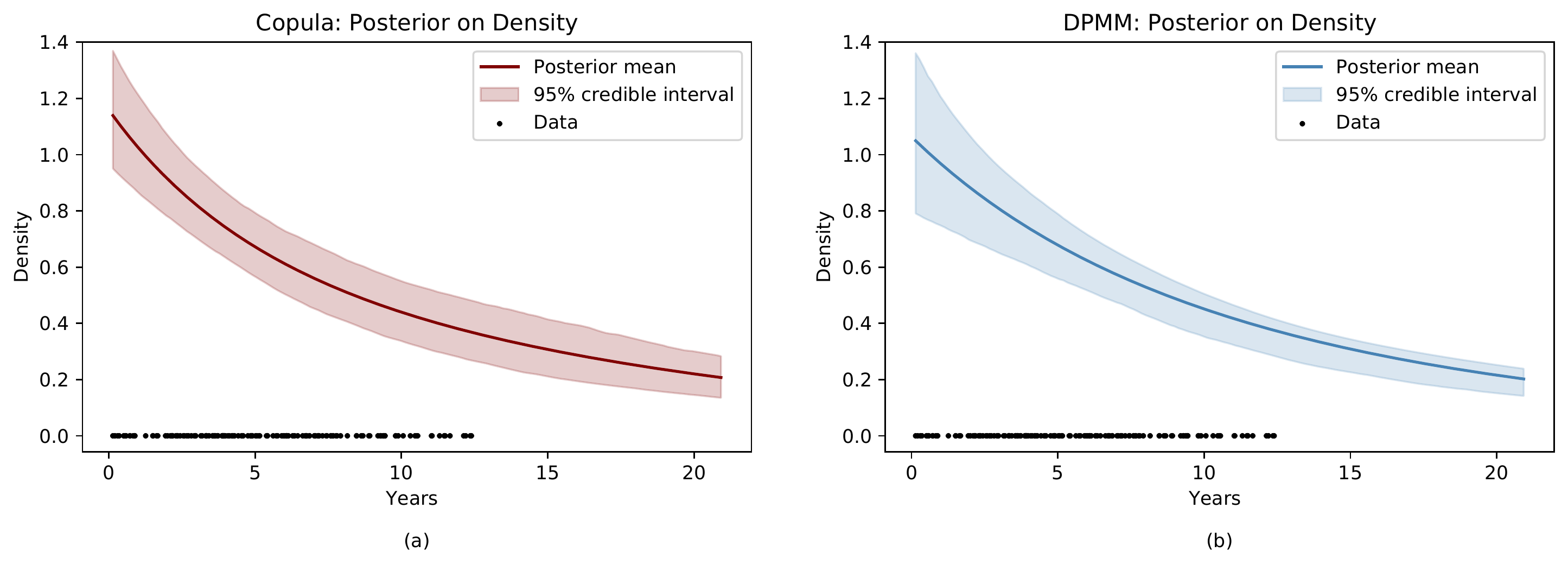}\vspace{-5mm}
  \caption{ Posterior mean and 95\% credible interval of density for (a) copula method and (b) \gls*{dpmm}.}\label{fig:pbc2_pdf}
\end{figure}

\subsection{Survival Regression}

For both the melanoma and kidney examples, we set the bandwidth $a$ on the grid [0.5,0.6,0.7,0.8,0.9] for the copula method. For the nonparametric baseline, we use the \texttt{ddpsurvival} function with default settings from the \texttt{ddpanova} package \citep{deIorio2004}. The package uses the methodology introduced in \citet{deIorio2009} which extends the ANOVA dependent Dirichlet process (DDP) of \citet{deIorio2004}. This method however employs a Gaussian kernel DDP, so we log transform the survival times before fitting, which is equivalent to running a DDP with a log-normal kernel. We also compare to the linear AFT with the log-normal distribution, using the \texttt{survreg} function in the \texttt{survival} package  \citep{survival-package} in \texttt{R}. 

For the melanoma dataset, we also provide additional results for fitting to the full dataset. For the copula method, optimizing the hyperparameters and fitting the model gives us $\rho = 0.6,\rho_x = 0.8$. For Figure \ref{fig:melanoma_surv_KM} in the main paper, we compute the predictive density on a grid of size 56 between 0 and 5565 days (largest datum).
Similarly, Figure \ref{fig:melanoma_median_fun}a shows the equivalent plot for the DDP - the fit with the \gls*{km} plots for the DDP is not as close as the copula method. We also compute the median survival time as a function of $x$ on a $x$-grid of size $40$, which is shown in Figure \ref{fig:melanoma_median_fun}b. The median function of the DDP is smoother than that of the copula method, where the latter is controlled by the value of $\rho_x$.

To demonstrate predictive resampling, we consider the conditional density/survival function at $x = 3.4$. For the copula method, we carry out $N = 10000 + n$ forward samples with $B = 2000$, which takes 5.6s. However, we point out that this needs to be run for each $
\mathbf{x}$ of interest, which may be costly. We plot the posterior mean and 95\% credible intervals for the copula estimate of the conditional density and survival function in Figure \ref{fig:melanoma_dens_surv}, with the DDP posterior mean functions overlaid. Finally, in Figure \ref{fig:melanoma_diagnostics}a, we see that 3 resampling steps reduce the unique particle count to $\approx 400$, and Figure \ref{fig:melanoma_diagnostics}b demonstrates that $ N = 10000 + n$ is sufficient for convergence.

\begin{figure}[!ht]
\centering
 \includegraphics[width=0.97\textwidth]{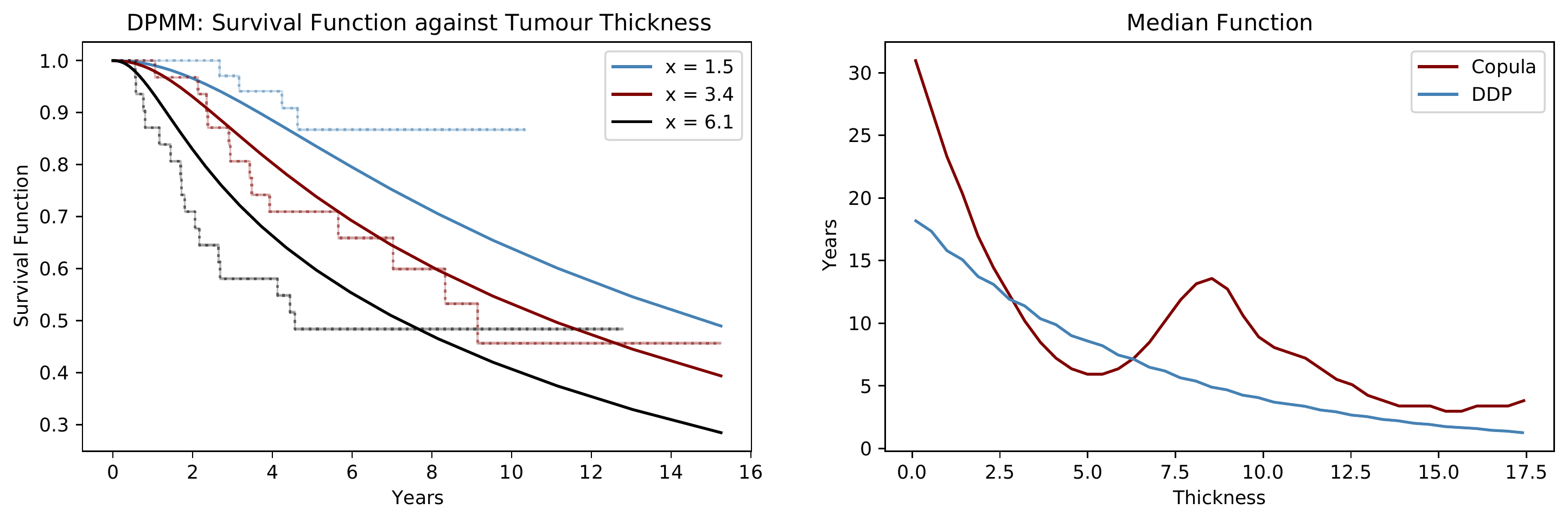}\vspace{-5mm}
  \caption{(a) Survival function for DDP method (\fullmidk) for $x =\{1.5,3.4,6.1\}$ with \gls*{km} (\dottedmidk) fit to windows $\{(1.255,1.75),(2.7,4.1),(4.1,8.1)\}$; (b) Median survival time against tumour thickness $x$}\label{fig:melanoma_median_fun}
\end{figure}

\begin{figure}[!ht]
\centering
 \includegraphics[width=0.97\textwidth]{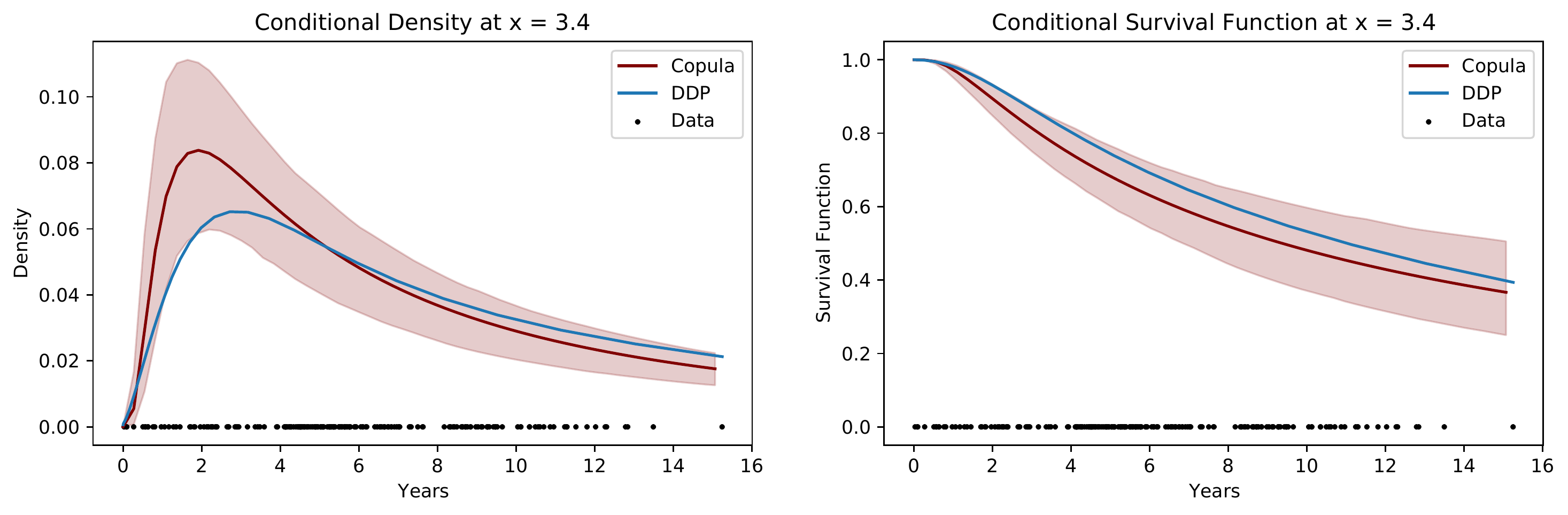}\vspace{-5mm}
  \caption{Posterior mean and 95\% credible interval (for copula method only) of density for (a) density and (b) survival function.}\label{fig:melanoma_dens_surv}
\end{figure}

\begin{figure}[!ht]
\centering
 \includegraphics[width=0.97\textwidth]{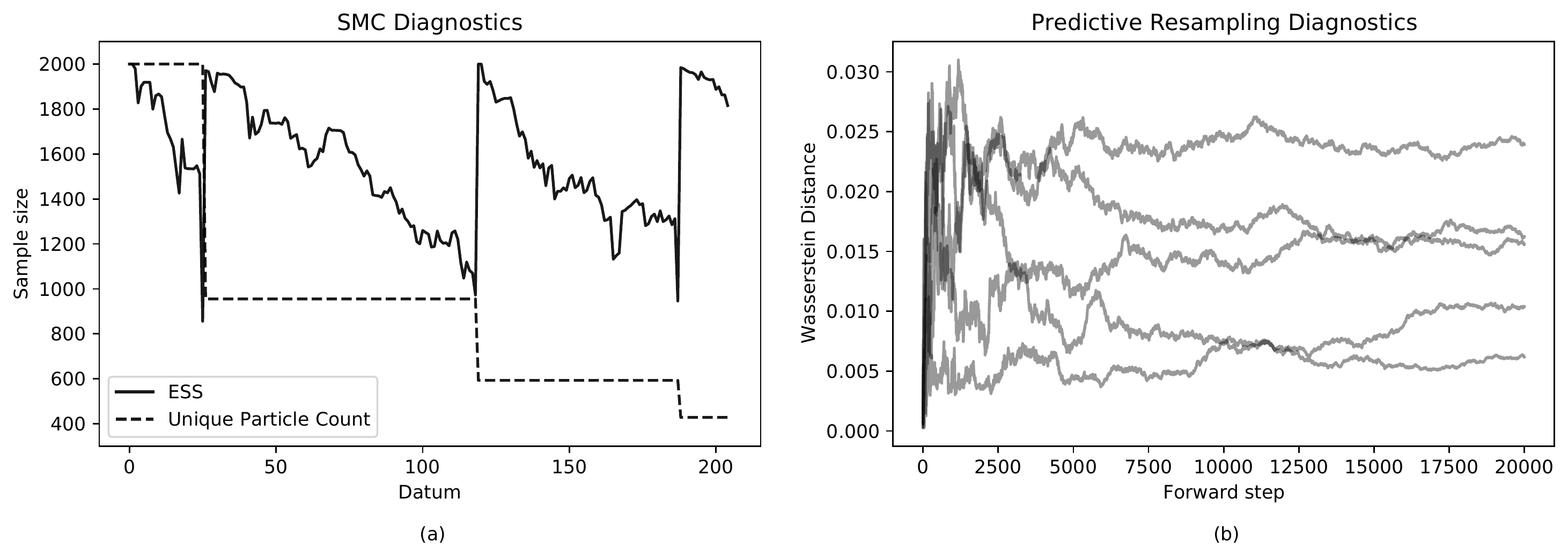}\vspace{-5mm}
  \caption{(a) \gls*{ess} and unique particles; (b) Trajectories of Wasserstein-1 Distance from predictive resampling.}\label{fig:melanoma_diagnostics}
\end{figure}

\end{document}